\newcommand{\Comment}[1]{\ignorespaces}
\long\def\LongVersion#1\LongVersionEnd{#1}
\long\def\ShortVersion#1\ShortVersionEnd{}
\newtheorem{theorem}{Theorem}[section]
\newtheorem{lemma}[theorem]{Lemma}
\newtheorem{observation}[theorem]{Observation}
\newtheorem{corollary}[theorem]{Corollary}
\theoremstyle{definition}
\newtheorem*{definition}{Definition}
\theoremstyle{plain}
\newcommand\BigO{\mathcal{O}}
\newcommand{\Sec}[0]{Sect.}
\newcommand{\Appendix}[0]{Appendix}
\newcommand{\Figure}[0]{Fig.}
\newcommand{\Equation}[0]{Eq.}
\newenvironment{MathMaybe}[0]
{\begin{displaymath}\ignorespaces}
{\end{displaymath}\ignorespacesafterend}
\newenvironment{MathMaybe}[0]
{\let\oldenspace\endspace \def\enspace{} \begin{math}\ignorespaces}
{\end{math}\ignorespacesafterend \def\enspace\oldenspace}
\newcommand\lc[1]{\ensuremath{{#1}_{_{\!L\!}}}}
\newcommand\hc[1]{\ensuremath{{#1}_{_{\!H\!}}}}
\newcommand\wb{\ensuremath{\psi}}
\newcommand\Wb{\ensuremath{\Psi}}
\newcommand\gl{\ensuremath{\lambda}}
\newcommand\eff{\ensuremath{\eta}}
\newcommand\lw{\ensuremath{\Lambda}}
\newcommand\mc{\ensuremath{\xi}\xspace}
\newcommand\ratio{\ensuremath{\rho}}
\newcommand{\grt}[0]{H}
\newcommand\cor\sim
\newcommand\Greedy{\textsc{Greedy}\xspace}
\providecommand{\dist}[1]{\ensuremath{\mathrm{dist}_{#1}}}
\newcommand{\Integers}{\mathbb{Z}}
\newcommand{\PoA}{\ensuremath{\mathrm{PoA}}\xspace}
\newcommand{\PoS}{\ensuremath{\mathrm{PoS}}\xspace}
\begin{document}

\title{The Price of Matching Selfish Vertices \\[8pt]
\large{or: How much love is lost because our governments do not arrange
marriages?}}

\author{Yuval Emek
\and
Tobias Langner
\and
Roger Wattenhofer \\[8pt]
Computer Engineering and Networks Lab (TIK), ETH Zurich, Switzerland \\
\texttt{\{yemek,\,langnert,\,wattenhofer\}@tik.ee.ethz.ch}
}

\date{}

\begin{titlepage}

\maketitle

\begin{abstract}
We analyze the setting of minimum-cost perfect matchings with selfish
vertices through the price of anarchy (\PoA{}) and price of stability (\PoS{})
lens.
The underlying solution concept used for this analysis is the Gale-Shapley
stable matching notion, where the preferences are determined so that each
player (vertex) wishes to minimize the cost of her own matching edge.
\end{abstract}

\noindent
\textbf{Keywords:}
minimum-cost perfect matching,
stable matching,
price of anarchy,
price of stability,
metric costs,
$\alpha$-stability.

\renewcommand{\thepage}{}
\end{titlepage}

\pagenumbering{arabic}

\section{Introduction}
Studying the impact of selfish players has been a major theoretical computer
science success story in the last decade (see, e.g., the 2012 G\"{o}del Prize
\cite{KoutsoupiasP09,RoughgardenT02,NisanR01}).
In particular, much effort has been invested in quantifying how the
efficiency of a system degrades due to the selfishness of its players.
The most notable notions in this context are the \emph{price of anarchy
(\PoA{})}~\cite{KoutsoupiasP09,Papa01} and the \emph{price of stability
(\PoS{})}~\cite{SSM03,ADK+08}, comparing the best possible outcome to the
outcome of the worst (\PoA{}) or best (\PoS{}) solution with selfish players.
Selfishness in this regard is usually captured by the \emph{Nash equilibrium}
solution concept, where no player can benefit from a unilateral deviation.

The players considered in the current paper are identified with the vertices
of a complete (or complete bipartite) weighted graph;
our goal is then to analyze the \PoA{} and \PoS{} of \emph{minimum-cost
perfect matchings}, where the efficiency of an outcome (a matching incident to
all vertices) is measured in terms of the sum of edge weights (a.k.a.\ costs).
Since unilateral deviations do not make sense in a matching setting, we
replace the Nash equilibrium solution concept with that of the
\emph{Gale-Shapley stable matching} notion~\cite{GS62}, where no two unmatched
players (strictly) prefer each other over their current matching partners,
defining the preferences so that each player wishes to minimize the weight of
the matching edge on which she is incident.

It is not difficult to show that a stable perfect matching always exists in a
complete (or complete bipartite) weighted graph with an even number of
vertices (cf.\ \cite{ABE+09} or the proof of
Lemma~\ref{lemma:UnstableEdgeCreation} in the current paper).
Yet, a simple example shows that in general, the situation is hopeless
(unbounded \PoA{} and \PoS{}):
Let $G$ be a complete graph on four nodes $u_1, u_2, v_1, v_2$ with edge
weights $w(u_1, v_1) = w(u_2, v_2) = 1$, $w(u_1, u_2) = \varepsilon$ for some
small $\varepsilon > 0$, and $w(v_1, v_2) = w(u_1, v_2) = w(u_2, v_1) = W$ for
some large $W$.
Then, the optimal perfect matching matches $u_i$ to $v_i$ for $i = 1,
2$ with a cost of $2$, whereas the unique stable matching (and any reasonable
approximation thereof) must match $u_1$ to $u_2$, and hence also $v_1$ to
$v_2$ which incurs a large cost.

The problem becomes much more interesting if we restrict ourselves to
\emph{metric} instances, namely, graphs with edge weights that obey the
\emph{triangle inequality} (or its bipartite counterpart).
Such instances correspond to settings where the players' preferences are
biased towards players of a similar type, e.g., when the players prefer to be
matched to players 
of a geographical proximity,
with a similar taste in film and music, or
with a similar appreciation for coriander.
Indeed, we establish an upper bound of $\BigO (n^{\log (3 / 2)})$ on the
\PoA{} and \PoS{} of minimum-cost perfect matchings in metric graphs with $2
n$ vertices, where $\log (3 / 2) \simeq 0.58$, and show that this is
asymptotically tight.\footnote{
In this paper, $\log x$ denotes the logarithm of $x$ to the base of $2$.
}
The somewhat unattractive polynomial dependency on $n$ raises the following
question:
How does \PoS{} improve once the Gale-Shapley stability is relaxed to
\emph{$\alpha$-stability}, where two unmatched vertices deviate from the
current matching only if both improve their costs by a factor greater than
$\alpha \geq 1$?
(Observe that since, by definition, every stable matching is also
$\alpha$-stable, this question is irrelevant in the context of \PoA{} that can
only increase by such a relaxation.)
We answer this question by establishing an asymptotically tight trade-off,
showing that with respect to $\alpha$-stable matchings, \PoS{} improves to
$\Theta (n^{\log \left( 1 + \frac{1}{2 \alpha} \right)})$;
in particular, taking $\alpha = O (\log n)$ yields a constant \PoS{}.
All our results hold for both simple and bipartite metric graphs.

\paragraph{Related work.}
Finding a maximum matching in a graph is among the most extensively studied
problems in combinatorial optimization.
Edmonds presented the first poly-time algorithm for the unweighted version of
the problem as well as a solution for finding a maximum-weight matching in
weighted graphs~\cite{Edmonds65,Edmonds65a} and initiated a long and fruitful
line of work on this problem
\cite{HK73,MV80,RT81,ABM+91,GT91,Motwani94,MS04}.
Reducing the minimum-weight perfect matching problem in complete graphs to the
maximum-weight matching problem is trivial.

In the stable matching setting, originally introduced by Gale and
Shapley~\cite{GS62}, each node is equipped with a totally ordered list of
\emph{preferences} on the other nodes.
Gale and Shapley showed that in the bipartite (\emph{marriage}) variant, a
stable matching always exists, and in fact, can be computed by a simple
poly-time algorithm.
In contrast, the all-pairs (\emph{roommates}) variant does not necessarily
have a solution.
Both variants of the stable matching problem admit a plethora of literature;
see, e.g., the books of Knuth~\cite{Knuth76}, Gusfield and Irving~\cite{GI89},
and Roth and Sotomayoror~\cite{RS90}.

Sometimes, the nodes' preferences are associated with real
\emph{costs} so that each preference list is sorted in order of increasing (or
non-decreasing if ties are allowed) costs.
This setting gives rise to the problem of computing a \emph{minimum-cost}
stable matching (a generalization of the \emph{egalitarian} stable matching
problem).
Irving et al.~\cite{ILG87} and Feder~\cite{Feder92} designed poly-time
algorithms for the bipartite variant of this problem;
the NP-hardness of the all-pairs variant was established by
Feder~\cite{Feder92} who also showed that the problem admits a
$2$-approximation.

The results discussed so far apply to arbitrary preference lists, where the
nodes' preferences exhibit no intrinsic correlations.
Several approaches have been taken towards introducing some consistency in the
preference lists~\cite{Knuth76,NH91,IMS08}.
Most relevant to the current paper is the approach of Arkin et
al.~\cite{ABE+09} who studied the \emph{geometric} stable roommate problem,
where the nodes correspond to points in a Euclidean space and the preferences
are given by the sorted distances to the other points.
They showed that in the geometric setting, a stable matching always exists and
that it is unique if the nodes' preferences exhibit no ties.
These results easily generalize to arbitrary metric spaces.
Arkin et al.\ also introduced the notion of an \emph{$\alpha$-stable} matching
for $\alpha \geq 1$ --- which is central to the current paper --- where nodes
are only willing to switch to a new match if they can improve over their
current partner by more than an $\alpha$-factor.

From a game theoretic perspective, it is interesting to point out that the
algorithm of Gale and Shapley is not \emph{incentive compatible}, namely, a
strategic player will not necessarily cooperate with this algorithm when
probed for her preferences.
In fact, Roth~\cite{Roth82} showed that there does not exist a stable marriage
algorithm under which, it is a dominant strategy for all players to be
truthful about their preferences.
We do not consider the issue of incentive compatibility in the current paper
(it is not even clear how this is defined in a weighted undirected graph).

The price of anarchy was introduced by Koutsoupias and
Papadimitriou~\cite{KoutsoupiasP09,Papa01} and since then has become a
cornerstone of algorithmic game theory.
The price of stability was first studied by Schulz and
Stier Moses~\cite{SSM03}, while the term itself was coined by Anshelevich et
al.~\cite{ADK+08}.
Since their introduction, the price of anarchy and the price of stability have
been extensively analyzed in diverse settings such as
selfish
routing~\cite{RoughgardenT02,Roughgarden03,ADK+08,SuriTZ07,AwerbuchAE05,ChristodoulouK05a,ChristodoulouK05b},
network formation
games~\cite{Vetta02,AnshelevichDTW03,AlbersEEMR06,ChenR06,AndelmanFM07},
job scheduling~\cite{KoutsoupiasP09,CzumajV02,KoutsoupiasMS03,AwerbuchART06},
and resource allocation~\cite{JohariT04,Roughgarden06}.


\section{Setting and Preliminaries}
\label{section:Model}
Consider a graph $G$ with vertex set $V(G)$ and edge set $E(G)$.
Each edge $e \in E(G)$ is assigned with a positive real \emph{weight} $w(e)$.
Unless stated otherwise, the graphs mentioned in this paper have $2 n$
vertices, $n \in \Integers_{> 0}$, and they are either complete
($|E(G)| = {2 n \choose 2}$)
or complete bipartite
($V(G) = U_1 \cup U_2$, $|U_1| = |U_2| = n$ and $|E(G)| = n^{2}$).
We say that the complete (or complete bipartite) graph $G$ is \emph{metric} if
$w(x, y) = \dist{G}(x, y)$
for every edge $(x, y) \in E(G)$, where $\dist{G}(x, y)$ denotes the
\emph{distance} between $x$ and $y$ in $G$ with respect to the edge weights
$w(\cdot, \cdot)$.

A \emph{matching} is a subset $M \subseteq E(G)$ of the edges such that every
vertex in $V(G)$ is incident to at most one edge in $M$.
The matching is called \emph{perfect} if every vertex in $V(G)$ is incident to
exactly one edge in $M$, which implies that $|M| = n$ as $|V(G)| = 2 n$.
For a perfect matching $M$ and a vertex $x \in V(G)$, we denote by $M(x)$ the
unique vertex $y \in V(G)$ such that $(x, y) \in M$.
Unless stated otherwise, all matchings mentioned hereafter are assumed to be
perfect.
(Perfect matchings clearly exist in a complete or complete bipartite graph
with an even number of vertices.)
Given an edge subset $F \subseteq E(G)$, we define the \emph{cost} of $F$ as
the total weight of all edges in $F$, denoted by
$c(F) = \sum_{e \in F} w(e)$;
in particular, the cost of a matching is the sum of its edge weights.

\begin{definition}[$\alpha$-Stable Matching]
Consider some (perfect) matching $M \subseteq E(G)$ and some real number $\alpha \geq
1$.
An edge $(u,v) \notin M$ is called \emph{$\alpha$-unstable} with respect to
$M$ if $\alpha \cdot w(u, v) < \min \{ w(u, M(u)), w(v, M(v)) \}$.
Otherwise, the edge is called \emph{$\alpha$-stable}.
A matching $M$ is called \emph{$\alpha$-stable} if it does not admit any
$\alpha$-unstable edge.
We will omit the parameter $\alpha$ and call edges as well as matchings just
\emph{stable} or \emph{unstable} whenever $\alpha$ is clear from the context
or the argumentation holds for every choice of $\alpha$. 
\end{definition}

Let $M^{*}$ denote a certain (perfect) matching $M$ that minimizes $c(M)$.
For simplicity, in what follows, we restrict our attention to complete (rather
than complete bipartite) graphs, although all our results hold also for the
complete bipartite case.

\begin{definition}[Price of Anarchy]
The \emph{price of anarchy} of a graph $G$, denoted by $\PoA(G)$, is defined as 
$\PoA(G) =
\max \{ c(M) / c(M^{*}) : M \text{ is a stable matching} \}$.
Let
$\PoA(2 n) =
\sup \{ \PoA(G) : G \text{ is metric, } |V(G)| = 2 n \}$.
\end{definition}

\begin{definition}[$\alpha$-Price of Stability]
The \emph{$\alpha$-price of stability} of $G$, denoted by $\PoS_{\alpha}(G)$, is
defined as
$\PoS_{\alpha}(G) =
\min \{ c(M) / c(M^{*}) : M \text{ is an $\alpha$-stable matching} \}$.
Let
$\PoS_{\alpha}(2 n) =
\sup \{ \PoS_{\alpha}(G) : G \text{ is metric, } |V(G)| = 2 n \}$.
Unless stated otherwise, when the parameter $\alpha$ is omitted, we refer to
the case $\alpha = 1$.
\end{definition}

\section{Price of Anarchy}
\label{section:PoA}

Our goal in this section is to establish the following theorem.

\begin{theorem} \label{theorem:PoA}
The \PoA{} of minimum-cost perfect matchings in metric graphs with $2 n$
vertices is $\Theta (n^{\log (3 / 2)})$.
\end{theorem}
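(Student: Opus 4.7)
\emph{Upper bound approach.} My plan is to analyze the symmetric difference $M\triangle M^*$ between a stable matching $M$ and a minimum-cost matching $M^*$ in any metric graph $G$ on $2n$ vertices. Since $M$ and $M^*$ are both perfect matchings, $M\triangle M^*$ decomposes into vertex-disjoint simple even cycles in which $M$- and $M^*$-edges alternate, and $c(M\cap M^*)$ cancels out of the ratio $c(M)/c(M^*)$. Hence it suffices to prove, for each alternating cycle $C$ with $k$ edges from $M$ (and $k$ edges from $M^*$), a per-cycle bound $c(M\cap C)\leq f(k)\cdot c(M^*\cap C)$ with $f(k)=\BigO(k^{\log(3/2)})$; summing over cycles and using $k\leq n$ then yields $c(M)\leq \BigO(n^{\log(3/2)})\cdot c(M^*)$. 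I would prove this by strong induction on $k$. The base case $k=2$ is a direct calculation: stability of each $M^*$-edge forces its weight to be at least that of the lighter incident $M$-edge, and a single triangle inequality bounds the heavier $M$-edge by the sum of the three remaining edges, giving $f(2)\leq 2$.

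\emph{Inductive step.} For the inductive step I locate the heaviest $M$-edge $m=(u,v)$ in $C$; stability of the two $M^*$-edges incident to $u,v$ (say $(u,u')$ and $(v,v')$), combined with the maximality of $w(m)$, forces $w(u,u')$ and $w(v,v')$ to dominate the $M$-edges one further step around the cycle. A triangle inequality applied to the short detour from $u$ to $v$ through $u',v'$ then controls $w(m)$ by a sum of $\BigO(1)$ weights in its vicinity, and charging these against the nearby $M^*$-edges eliminates a constant-size piece of the cycle. Since deleting $m$ from a cycle leaves an alternating path rather than a shorter cycle, the induction should actually be stated uniformly for alternating paths and cycles; doing so reduces the problem to an alternating residual with at most $\lceil k/2\rceil$ $M$-edges. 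Balancing the charges yields the recurrence $f(k)\leq(3/2)\,f(\lceil k/2\rceil)$, whose solution is $f(k)=\BigO(k^{\log(3/2)})$.

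\emph{Lower bound.} For the matching $\Omega(n^{\log(3/2)})$ lower bound I would construct a recursive family $\{G_d\}_{d\geq 0}$ of metric graphs with $|V(G_d)|=2\cdot 2^d$, each admitting a stable matching $M_d$ whose cost satisfies $c(M_d)/c(M_d^*)=\Omega((3/2)^d)$. The base $G_0$ is a single edge of unit weight; $G_{d+1}$ is built by taking two appropriately scaled vertex-disjoint copies of $G_d$ and adding cross-distances tuned simultaneously to preserve the triangle inequality, to keep the ``wrong'' matchings in each copy still stable in the combined graph, and to allow the optimum to pair vertices across the two copies cheaply (so that $c(M_{d+1}^*)$ grows by a factor strictly less than $3/2$ while $c(M_{d+1})$ grows by exactly that factor). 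An inductive verification then shows that the worst-stable-to-optimum ratio multiplies by $3/2$ per level, giving $(3/2)^d=\Theta(n^{\log(3/2)})$ at level $d$.

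\emph{Main obstacle.} The crux is the inductive step of the upper bound: one needs precisely the right charging of the heavy $M$-edge against its local neighbourhood so that the stability condition (a one-sided $\min$-inequality) and the triangle inequality (a sum inequality) combine tightly enough to yield the fractional exponent $\log(3/2)$, rather than the naive $1$ obtained by bounding every $M$-edge individually or the $\log 3$ that falls out of a less careful recursion. The complementary challenge on the lower-bound side is to choose cross-distances at each level so that metricity and the stability of the bad matching are simultaneously preserved, and no cheaper stable matching is accidentally created when the two copies are glued together.
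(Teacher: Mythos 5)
Your lower bound sketch is essentially the paper's Reingold--Tarjan construction (two scaled copies, spacing chosen so the ratio multiplies by $3/2$ per level), and your first reduction --- passing to alternating cycles in $M \oplus M^*$ --- matches the paper's Lemma~\ref{lemma:SingleAlternatingCycle}. The gap is in the upper-bound induction.

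You locate the heaviest $M$-edge $m=(u,v)$, bound $w(m)$ by a triangle-inequality detour, and charge it against $O(1)$ nearby $M^*$-edges, which you describe as ``eliminating a constant-size piece of the cycle.'' But removing $O(1)$ edges from an alternating cycle/path with $k$ $M$-edges leaves a residual with $k-O(1)$ $M$-edges, not $\lceil k/2\rceil$; the recurrence that actually follows from the step you describe is $f(k)\leq f(k-O(1))+O(1)$, which solves to $f(k)=O(k)$, not $O(k^{\log(3/2)})$. You assert the recurrence $f(k)\leq(3/2)f(\lceil k/2\rceil)$ without supplying any mechanism that halves the instance; this is where the fractional exponent has to come from, and it is exactly the hard part. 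The paper gets the halving by first \emph{structurally normalizing} the instance, not by local charging: it shows the worst cycle may be assumed to be a weighted cycle whose edge weights realize the cycle metric (Lemma~\ref{lemma:CycleConfigurations}), then that the heaviest $M$-edge may be stretched so the cycle becomes a weighted line graph with the optimum pairing consecutive points (Lemma~\ref{lemma:LineConfigurations}), and then induction is applied \emph{simultaneously to both halves} $L$ and $R$ of the line (Lemma~\ref{lemma:ReingoldTarjan}), with the middle $M$-edge's stability ($w(e)\leq\min\{D_L,D_R\}$) controlling the interaction. That two-sided split is what produces the $k\mapsto k/2$ recursion and the $(3/2)$-per-level factor (via the monotone function of the imbalance parameter $\lambda$); a one-sided ``peel off the heaviest edge'' argument does not. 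To repair your proof you would need either to reproduce these normalization steps, or to find a genuinely different global argument that partitions the alternating cycle into two roughly equal halves and bounds the cross-contribution --- the local triangle-inequality charge around the single heaviest edge does not achieve this.
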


Theorem~\ref{theorem:PoA} is established via a series of reductions,
essentially showing that $\PoA(2 n)$ is realized by weighted line graphs,
namely, metric graphs that can be embedded isometrically into the real line.
Following that, we introduce a family of weighted line graphs with \PoA{} of
$\Theta (n^{\log (3 / 2)})$ and show that no other weighted line graph admits
higher \PoA{}.
It is interesting to point out that this family of weighted line graphs was
first introduced by Reingold and Tarjan~\cite{RT81} for the analysis of a
greedy algorithm approximating the minimum-cost perfect matching problem in
metric graphs (with no stability considerations).

\begin{definition}[Matching Configuration]
A \emph{matching configuration (MC)} $\mc = (G, M^*, M)$ consists of a metric
graph $G$, a minimum-cost matching $M^*$, and a stable matching $M$ on $G$.
The \emph{ratio} of $\mc$ is defined as $\ratio(\mc) := c(M)/c(M^*)$.
\end{definition}

Observe that the definition of a MC \mc implies a
collection $\mathcal{A}(\mc)$ of alternating cycles in the symmetric
difference $M \oplus M^{*}$;
the cycles in $\mathcal{A}(\mc)$ are referred to hereafter as the alternating
cycles \emph{exhibited} by $\mc$.
We say that $\mc$ is \emph{spanned} by the cycles in $\mathcal{A}(\mc)$ if
each vertex of $G$ belongs to an alternating cycle in $\mathcal{A}(\mc)$.
Clearly, graphs with $2$ vertices admit a single (perfect) matching, hence
$\PoA(2) = 1$, so in what follows, it suffices to consider MCs on $2 n$
vertices for $n > 1$.
The following lemma states that it also suffices to consider MCs spanned by a
single alternating cycle.

\begin{lemma} \label{lemma:SingleAlternatingCycle}
For every MC $\mc = (G, M^{*}, M)$ on $2 n$ vertices, there exists a MC
$\hat{\mc}$ on $2 n'$ vertices, $1 < n' \leq n$, spanned by a single
alternating cycle such that
$\ratio(\hat{\mc}) \geq \ratio(\mc)$.
\end{lemma}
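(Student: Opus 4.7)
The plan is to pick a single worst alternating cycle of $\mc$ by an averaging argument and restrict the matching configuration to the vertices of that cycle, iterating if needed.

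For each $C \in \mathcal{A}(\mc)$ I write $a_C = c(C \cap M)$ and $b_C = c(C \cap M^{*})$, and I let $A = c(M \cap M^{*})$. Then $c(M) = A + \sum_C a_C$ and $c(M^{*}) = A + \sum_C b_C$, so $\ratio(\mc) = \rho$ means $\sum_C (a_C - \rho\, b_C) = (\rho - 1) A \geq 0$ (using that $\rho \geq 1$ since $M^{*}$ is optimal). The mediant inequality then yields a cycle $C^{*} \in \mathcal{A}(\mc)$ with $a_{C^{*}} / b_{C^{*}} \geq \rho$. Vertex-count-wise, a single alternating cycle has at least $4$ vertices, i.e.\ $n' := |V(C^{*})|/2 \geq 2$, and obviously $n' \leq n$.

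Next I build $\hat{G}$ as the complete graph on $V(C^{*})$ with $w_{\hat{G}}(u,v) := w_G(u,v) = \dist{G}(u,v)$. This is metric because any restriction of a metric is a metric. Both $M \cap E(\hat{G})$ and $M^{*} \cap E(\hat{G})$ are perfect matchings of $\hat{G}$, because along the cycle $C^{*}$ each vertex's $M$-partner and $M^{*}$-partner lie in $V(C^{*})$. Let $\hat{M} := M \cap E(\hat{G})$ and let $\hat{M}^{*}$ be a minimum-cost perfect matching of $\hat{G}$. Since $M^{*} \cap E(\hat{G})$ is a feasible matching of $\hat{G}$, $c(\hat{M}^{*}) \leq b_{C^{*}}$. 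Stability of $\hat{M}$ in $\hat{G}$ is inherited for free: for $u,v \in V(C^{*})$, $\hat{M}(u) = M(u)$, $\hat{M}(v) = M(v)$, and $w_{\hat{G}}(u,v) = w_G(u,v)$, so the stability inequality of $M$ in $G$ is literally the stability inequality in $\hat{G}$. Therefore $(\hat{G}, \hat{M}^{*}, \hat{M})$ is a valid MC with ratio at least $a_{C^{*}} / b_{C^{*}} \geq \rho$.

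The one loose end — and the main obstacle — is that although $\hat{M} \oplus (M^{*} \cap E(\hat{G})) = C^{*}$ is a single alternating cycle, the true optimum $\hat{M}^{*}$ may be strictly cheaper than $M^{*} \cap E(\hat{G})$ and can cause $\hat{M} \oplus \hat{M}^{*}$ to decompose into several alternating cycles, so the resulting MC is not automatically spanned by a single cycle. I resolve this by iterating the construction: repeatedly apply the averaging-and-restrict step to the current MC. At each step the vertex count strictly drops unless the MC is already spanned by one alternating cycle, and the ratio only grows, so after finitely many iterations I reach a MC $\hat{\mc}$ on $2n'$ vertices, $1 < n' \leq n$, spanned by a single alternating cycle, with $\ratio(\hat{\mc}) \geq \ratio(\mc)$, as required.
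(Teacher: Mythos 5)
Your proof is correct and follows essentially the same route as the paper's: select a single ``worst'' alternating cycle (your mediant/averaging argument is equivalent to the paper's choice of the cycle maximizing $c(M_A)/c(M^*_A)$) and restrict the matching configuration to its vertex set. The only substantive difference is your final iteration step, and it is worth noting that it is unnecessary: the ``loose end'' you identify is not actually loose. The restriction $M^* \cap E(\hat{G})$ is itself a minimum-cost perfect matching of $\hat{G}$ --- if some perfect matching $N$ of $\hat{G}$ were strictly cheaper, then $(M^* \setminus E(\hat{G})) \cup N$ would be a perfect matching of $G$ cheaper than $M^*$ (the edges of $M^*$ outside $\hat{G}$ still perfectly match $V(G) \setminus V(C^*)$, since $V(C^*)$ is a union of $M^*$-edges), contradicting the optimality of $M^*$. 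So one may simply take $\hat{M}^* := M^* \cap E(\hat{G})$ in the definition of $\hat{\mc}$, whence $\hat{M} \oplus \hat{M}^* = C^*$ spans $\hat{G}$ and a single restriction suffices; this is what the paper does implicitly when it asserts that $M^*_A$ ``is still a minimum-cost matching'' of $G_A$. Your iteration is a valid fallback (the vertex count strictly decreases while the ratio is monotone non-decreasing, and once the ratio exceeds $1$ the symmetric difference is guaranteed non-empty), so the argument goes through either way; the one degenerate case both you and the paper gloss over is $\ratio(\mc) = 1$ with $M = M^*$, where the lemma still demands \emph{some} single-cycle MC of ratio at least $1$, which is supplied by any trivial four-vertex example.
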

\begin{proof}
Since $\mathcal{A}(\mc) = \emptyset$ implies $\ratio(\mc) = 1$, we may
assume hereafter that $|\mathcal{A}(\mc)| \geq 1$, so let $A$ be an
alternating cycle in $\mathcal{A}(\mc)$ that maximizes the ratio
$c(M_{A}) / c(M^{*}_{A})$,
where $M_{A}$ and $M^{*}_{A}$ are the matchings $M^{*}$ and $M$, respectively,
restricted to the edges of $A$.
Let $G_{A}$ be the subgraph of $G$ induced by $V(A)$ and take
$\hat{\mc} = (G_{A}, M^{*}_{A}, M_{A})$.
Observe that $\hat{\mc}$ is a valid MC, since $M^{*}_{A}$ and $M_{A}$ are
still a minimum-cost matching and a stable matching, respectively, in
$G_{A}$.
By the choice of $A$, it follows that
$\ratio(\hat{\mc}) \geq \ratio(\mc)$.
\end{proof}

\begin{definition}[Weighted Cycle MC]
A MC $\mc = (G, M^*, M)$ is said to be a \emph{weighted
cycle MC} if $\mc$ is spanned by a single alternating cycle $A$ and
the edge weights in $G$ agree with the distances in the subgraph of $G$
induced by the edges in $E(A)$.
\end{definition}

Our next lemma states that it suffices to bound the \PoA{} in weighted cycle
MCs.

\begin{lemma} \label{lemma:CycleConfigurations}
For every MC $\mc = (G, M^{*}, M)$ on $2 n$ vertices which is spanned by a
single alternating cycle, there exists a weighted cycle MC $\hat{\mc}$ on $2
n$ vertices such that
$\ratio(\hat{\mc}) \geq \ratio(\mc)$.
\end{lemma}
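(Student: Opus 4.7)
The plan is to obtain $\hat{\mc} = (\hat{G}, M^*, M)$ by reweighting, rather than rebuilding, the underlying graph: take $\hat{G}$ to have the same vertex set as $G$ (and the same edge set, i.e.\ the complete graph on $V(G)$) and, for every edge $(x, y)$, define $\hat{w}(x, y) := \dist{H}(x, y)$, where $H$ is the subgraph $(V(A), E(A))$ induced by the single spanning alternating cycle $A$ exhibited by $\mc$. The matchings $M^*$ and $M$ are kept unchanged. By construction, $\hat{G}$ is metric (its weights are shortest-path distances in a subgraph of $G$), and $\hat{\mc}$ is still spanned by the same alternating cycle $A$; hence, once we verify that $M^*$ is minimum-cost and $M$ is stable in $\hat{G}$, the tuple $\hat{\mc}$ is the desired weighted cycle MC.

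The first step I would carry out is to show that every edge $e = (u, v) \in E(A)$ retains its original weight, i.e.\ $\hat{w}(e) = w(e)$. Since $G$ is metric, iterating the triangle inequality along the other arc of $A$ yields $w(e) \leq c(A) - w(e)$, so $e$ itself is already a shortest $u$-$v$ path in $H$. This has two useful consequences: $c_{\hat{G}}(M) = c_G(M)$ and $c_{\hat{G}}(M^*) = c_G(M^*)$, so $\ratio(\hat{\mc}) = \ratio(\mc)$; and the weights of all matching edges (whether in $M$ or in $M^*$) are identical under $w$ and $\hat{w}$. The second step is the trivial monotonicity observation that for an arbitrary pair $(x, y)$ of vertices, any path in $H$ is also a path in $G$, so $\hat{w}(x, y) \geq \dist{G}(x, y) = w(x, y)$.

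Given these two facts, optimality and stability follow routinely. For optimality, any perfect matching $M'$ in $\hat{G}$ satisfies $c_{\hat{G}}(M') \geq c_G(M') \geq c_G(M^*) = c_{\hat{G}}(M^*)$, so $M^*$ is still a minimum-cost matching. For $\alpha$-stability, an arbitrary edge $(x, y) \notin M$ has $\hat{w}(x, y) \geq w(x, y)$, while $\hat{w}(x, M(x)) = w(x, M(x))$ and $\hat{w}(y, M(y)) = w(y, M(y))$ because $M \subseteq E(A)$; therefore the $\alpha$-stability inequality for $(x, y)$ in $G$ transfers verbatim to $\hat{G}$, and no edge is $\alpha$-unstable in $\hat{\mc}$.

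The main subtlety I expect lies in the first step, namely in invoking the metric assumption on $G$ to prevent any edge of $A$ from being so heavy that its length would collapse to that of the complementary arc upon switching from $w$ to $\dist{H}$. Were this to happen, some $M$- or $M^*$-edge might lose cost, potentially invalidating either the ratio equality or the stability of $M$ in $\hat{G}$; the bound $w(e) \leq c(A)/2$ rules this out. Once this is in place, the remainder is a purely monotone reweighting argument. The complete bipartite case is handled identically, since the alternating cycle $A$ has even length and shortest-path distances in $H$ respect the bipartition of $G$.
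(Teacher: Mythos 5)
Your proof is correct. It arrives at the same final object as the paper's proof --- the weighting in which every edge weight equals the corresponding distance in the cycle $A$ --- but by a different route: the paper gets there iteratively, repeatedly selecting a minimum-weight shortcut $(x,y)\notin E(A)$ and raising $w(x,y)$ to $\min_{z}(w(x,z)+w(z,y))$, checking that each such local step preserves metricity, the optimality of $M^{*}$, and the stability of $M$, and asserting that finitely many steps remove all shortcuts. You instead define $\hat{w}=\dist{H}$ in a single global step and verify the three properties directly, the key points being that each cycle edge already realizes its cycle distance (by the triangle inequality around the complementary arc, $w(e)\le c(A)-w(e)$), so that $\hat{w}=w$ on $E(A)=M\cup M^{*}$ while $\hat{w}\ge w$ everywhere else. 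Your version buys a cleaner argument in two respects: it sidesteps the termination question for the iterative process (a single increase of $w(x,y)$ to $w(x,z)+w(z,y)$ need not eliminate the shortcut being modified, so finiteness is not entirely immediate in the paper's formulation), and it makes explicit the fact --- needed for $\hat{\mc}$ to meet the weighted-cycle-MC definition but left implicit in the paper --- that the edges of $A$ themselves already agree with the cycle distances. What the paper's version buys in exchange is that metricity is certified by a purely local comparison at each step, whereas you appeal to the general fact that shortest-path distances in a subgraph form a metric dominating the original one. Both arguments are sound.
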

\begin{proof}
Let $A$ be the single alternating cycle spanning $\mc$.
If $\mc$ is not a weighted cycle MC, then $G$ must admit a \emph{shortcut} ---
an edge $(x, y) \in E(G) - E(A)$ satisfying
$w(x, y) < \dist{A}(x, y)$, where $\dist{A}(x, y)$ denotes the distance
between $x$ and $y$ in the (weighted) cycle $A$.
Let $(x, y)$ be a shortcut minimizing $w(x, y)$ and let $z \in V(G) \setminus
\{x, y\}$ be the vertex minimizing $w(x,z) + w(z,y)$.
Observe that $w(x, y)$ must be strictly smaller than $w(x,z) + w(z,y)$ as
$(x, y)$ is a shortcut of $G$ and $G$ does not admit any shorter shortcut.
We argue that the weight of $(x, y)$ can be increased to $w(x, z) + w(z, y)$
without violating the validity of $\mc$ as a MC.
The assertion follows since by repeating this step (finitely many times),
we remove all the shortcuts of $G$.
To that end, note that after increasing $w(x, y)$ to $w(x, z) + w(z, y)$,
$M^{*}$ remains a minimum-cost matching of $G$ (we only increased the weight
of some edge not in $M^{*}$) and $M$ remains a stable matching of $G$ (we only
increased the weight of some edge not in $M$).
So, all we have to show is that $G$ remains metric, which follows from the
choice of $z$.
\end{proof}

\begin{definition}[Weighted Line MC]
We say that a $(2 n)$-vertex metric graph $G$ is a \emph{weighted line graph}
if it can be isometrically embedded into the real line.
As such, it is convenient to identify the vertices of $G$ with the reals
$x_{1} < \cdots < x_{2 n}$ so that $w(x_{i}, x_{j}) = x_{j} - x_{i}$ for
every $1 \leq i < j \leq 2 n$.
In some cases, it will also be convenient to define a weighted line graph by
setting the all differences $x_{i + 1} - x_{i}$ without explicitly specifying
the $x_{i}$s themselves.
A \emph{weighted line MC} $\mc = (G, M^*, M)$ is a MC on $2 n$ vertices
satisfying:
(1) $G$ is a weighted line graph;
(2) $M^{*} = \{ (x_{2 i - 1}, x_{2 i}) \mid 1 \leq i \leq n \}$; and
(3) $M = \{ (x_{2 i}, x_{2 i + 1}) \mid 1 \leq i < n \} \cup \{ (x_1, x_{2 n})
\}$.
Observe that $\mc$ is spanned by a single alternating cycle
$A = (x_1, \dots, x_{2 n}, x_1)$.
\end{definition}

Note that requirement (2) in the definition is not really necessary:
the requirement that $G$ is a weighted line graph already implies that $\{
(x_{2 i - 1}, x_{2 i}) \mid 1 \leq i \leq n \}$ is the unique minimum-cost
matching of $G$ as every other matching $M'$ contains some edge $(x_{i},
x_{j})$ such that $x_{j} - x_{i} > 1$;
it is easy to show that such an edge must belong to an improving alternating
cycle, hence $M'$ cannot be optimal.
Given a $(2 n)$-vertex weighted line graph $G$, we shall subsequently denote
this unique minimum-cost stable matching by $M^{*}(G)$ and the matching
$\{ (x_{2 i}, x_{2 i + 1}) \mid 1 \leq i < n \} \cup \{ (x_1, x_{2 n}) \}$
by $M(G)$.
By definition, $\mc = (G, M^{*}(G), M(G))$ is a valid (weighted line) MC if
and only if $M(G)$ is stable.
Note also that a weighted line MC is a refinement of a weighted cycle MC, with
the additional requirement that the weight of the longest edge in the unique
alternating cycle $A$ equals the total weight of all other edges of $A$.
Building on this fact, the next lemma states that it suffices to consider
weighted line MCs.

\begin{lemma} \label{lemma:LineConfigurations}
For every weighted cycle MC $\mc = (G, M^{*}, M)$ on $2 n$ vertices, there
exists a weighted line MC $\hat{\mc}$ on $2 n$ vertices such that
$\ratio(\hat{\mc}) \geq \ratio(\mc)$.
\end{lemma}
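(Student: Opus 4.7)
The plan is to construct $\hat{\mc}$ by cutting the alternating cycle $A$ of $\mc$ at a carefully chosen $M$-edge and laying the resulting path isometrically on the real line. Let $A = (x_1, \ldots, x_{2n}, x_1)$ with $M^{*}$- and $M$-edges alternating along the cycle, and let $e^{*} \in M$ be a \emph{heaviest} $M$-edge of $A$. After cyclically relabeling so that $e^{*} = (x_{2n}, x_1)$, I would define $\hat{G}$ as the weighted line graph with vertex $y_i$ placed at the real coordinate $\sum_{k=1}^{i-1} w(x_k, x_{k+1})$, under the identification $y_i \leftrightarrow x_i$. Then the $M^{*}$-edges are exactly the adjacent-pair matching $\{(y_{2i-1}, y_{2i})\}$, the non-cut cycle edges keep their original weights in $\hat{G}$, and the edge corresponding to $e^{*}$ becomes the long line edge $(y_1, y_{2n})$; take $\hat{M}^{*} = M^{*}(\hat{G})$ and $\hat{M} = M(\hat{G})$.

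To get $\ratio(\hat{\mc}) \geq \ratio(\mc)$, I would exploit the cycle-metric identity: for every cycle edge $e = (x, y)$ one has $w(e) = \dist{A}(x, y) = \min(w(e), P - w(e))$, where $P = c(M^{*}) + c(M)$ is the perimeter of $A$. This forces $w(e) \leq P/2$, and in particular $w(e^{*}) \leq P/2$. The cut edge now weighs $y_{2n} - y_1 = P - w(e^{*})$, so $c_{\hat{G}}(\hat{M}^{*}) = c(M^{*})$ while $c_{\hat{G}}(\hat{M}) = c(M) + P - 2w(e^{*})$, yielding
\[
\ratio(\hat{\mc}) - \ratio(\mc) \;=\; \frac{P - 2w(e^{*})}{c(M^{*})} \;\geq\; 0.
\]
Optimality of $\hat{M}^{*}$ in $\hat{G}$ is inherited from the observation, made just before the lemma statement, that in a weighted line graph the adjacent-pair matching is the unique minimum-cost matching.

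The main obstacle is showing that $\hat{M}$ is stable in $\hat{G}$, because cutting the cycle can only increase pairwise weights (a line distance is one cycle path length, while the cycle distance is the minimum over both cycle paths), so inflating the long edge $(y_1, y_{2n})$ risks creating new unstable edges. For a non-$\hat{M}$ edge $(y_i, y_j)$ with $\{y_i, y_j\} \cap \{y_1, y_{2n}\} = \emptyset$, the $\hat{M}$-weights at $y_i$ and $y_j$ are unchanged from $G$, so the inequality $w_{\hat{G}}(y_i, y_j) \geq w(x_i, x_j)$ together with stability of $M$ in $G$ immediately gives stability. The delicate case is an edge incident to $y_1$ (or symmetrically $y_{2n}$), whose match weight has jumped to $P - w(e^{*})$; here the choice of $e^{*}$ as a heaviest $M$-edge is crucial. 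It guarantees $w(x_j, M(x_j)) \leq w(e^{*})$ for every other vertex $y_j$, so the original stability inequality $w(x_1, x_j) \geq \min\!\bigl(w(e^{*}), w(x_j, M(x_j))\bigr)$ collapses to $w(x_1, x_j) \geq w(x_j, M(x_j))$, which in turn lifts to $w_{\hat{G}}(y_1, y_j) \geq w_{\hat{G}}(y_j, \hat{M}(y_j))$ and preserves stability. A naive cut (for instance at the lightest $M$-edge, which would even enlarge the ratio gain) would typically break stability through exactly this inflation mechanism, so cutting at a heaviest $M$-edge strikes the right balance.
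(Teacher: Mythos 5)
Your proposal is correct and is essentially the paper's own argument: the paper also selects a heaviest $M$-edge $e$ and stretches it to the total weight $W_{-e}$ of the remaining cycle edges (which is exactly your cut-and-embed operation), noting that optimality of $M^{*}$ is untouched, that the ratio can only grow since $w(e)\leq W_{-e}$, and that stability survives precisely because $e$ was heaviest in $M$. Your writeup just makes the line embedding and the stability check for edges incident to the cut vertices more explicit than the paper does.
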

\begin{proof}
Let $A$ be the single alternating cycle spanning $\mc$ and let $e$ be an edge
in $M$ that maximizes $w(e)$.
Let $W_{-e} = \sum_{e' \in E(A) \setminus \{ e \}} w(e')$.
Clearly, $w(e) \leq W_{-e}$, as otherwise, $G$ is not metric.
We argue that if $w(e) < W_{-e}$, then the weight of $e$ can be increased to
$W_{-e}$ without violating the validity of $\mc$ as a MC;
the assertion follows because this step turns $\mc$ into a weighted line MC.
To that end, note that after increasing $w(e)$ to $W_{-e}$, $G$ remains metric
($\mc$ is a weighted cycle MC) and $M^{*}$ remains a minimum-cost matching (we
only increased the weight of some edge not in $M^{*}$).
So, all we have to show is that $M$ remains stable, which follows from the
choice of $e$.
\end{proof}

Once we restrict our attention to weighted line configurations, we can augment
$G$ with new vertices without significantly affecting the ratio of the MC.

\begin{lemma} \label{lemma:AddingPairs}
For every weighted line MC $\mc = (G, M^{*}, M)$ on $2 n$ vertices and for any
$\epsilon > 0$, there exists a weighted line MC $\hat{\mc}$ on $2 (n + 1)$
vertices such that
$\ratio(\hat{\mc}) \geq \ratio(\mc) - \epsilon$.
\end{lemma}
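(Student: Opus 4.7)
The plan is to extend $G$ by appending two new vertices $x_{2n+1},x_{2n+2}$ just to the right of $x_{2n}$, with spacing controlled by a single small parameter $\delta>0$ to be tuned at the end. Concretely, I would let $\hat{G}$ be the weighted line graph on $x_1<\cdots<x_{2n}<x_{2n+1}<x_{2n+2}$ with $x_{2n+1}-x_{2n}=x_{2n+2}-x_{2n+1}=\delta$, and set $\hat{\mc}:=(\hat{G},M^{*}(\hat{G}),M(\hat{G}))$. By the definition of a weighted line MC, we automatically have $M^{*}(\hat{G})=M^{*}(G)\cup\{(x_{2n+1},x_{2n+2})\}$ and $M(\hat{G})=\bigl(M(G)\setminus\{(x_1,x_{2n})\}\bigr)\cup\{(x_{2n},x_{2n+1}),(x_1,x_{2n+2})\}$, so the only things left to verify are that $M(\hat{G})$ is stable in $\hat{G}$ and that $\ratio(\hat{\mc})\to\ratio(\mc)$ as $\delta\to 0^{+}$.

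For the stability check, I would split the unmatched edges of $\hat{G}$ into cases. Edges $(x_j,x_k)$ with both endpoints in $V(G)\setminus\{x_1,x_{2n}\}$ retain their weights and partner weights from $\mc$, so stability is inherited. For edges $(x_1,x_k)$ with $k\in\{2,\dots,2n-1\}$ the partner weight of $x_1$ only \emph{increases} (from $x_{2n}-x_1$ to $x_{2n+2}-x_1$), but the $\min$ in the stability condition is unchanged because it was already realized by the partner weight of $x_k$ in $\mc$; this uses the key property that $(x_1,x_{2n})$ is the \emph{heaviest} edge of $M(G)$, its weight being equal to the sum of the other cycle edges. For an edge $(x_j,x_{2n})$ with $j\in\{2,\dots,2n-1\}$, the partner weight of $x_{2n}$ shrinks to $\delta$, but original stability already gave $w(x_j,x_{2n})\geq w(x_j,M(G)(x_j))$ by the same ``heaviest edge'' property, which trivially implies $w(x_j,x_{2n})\geq\min(w(x_j,M(G)(x_j)),\delta)$. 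The (now unmatched) edge $(x_1,x_{2n})$ is stable as long as $\delta<x_{2n}-x_1$. Every unmatched edge incident to $x_{2n+1}$ has weight at least $\delta=w(x_{2n+1},M(\hat{G})(x_{2n+1}))$, so the strict-inequality instability condition fails trivially. Finally, for $(x_j,x_{2n+2})$ with $j\in\{2,\dots,2n-1\}$, the original stability of $(x_j,x_{2n})$ yields $w(x_j,x_{2n+2})=w(x_j,x_{2n})+2\delta>w(x_j,M(\hat{G})(x_j))$, and the remaining edge $(x_{2n},x_{2n+2})$ has weight $2\delta$ versus the partner weight $\delta$ at $x_{2n}$.

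A direct computation then yields $c(M^{*}(\hat{G}))=c(M^{*}(G))+\delta$ and $c(M(\hat{G}))=c(M(G))+3\delta$, where the $+3\delta$ decomposes as $+\delta$ for the new matched edge $(x_{2n},x_{2n+1})$ and $+2\delta$ for replacing $(x_1,x_{2n})$ by $(x_1,x_{2n+2})$. Hence $\ratio(\hat{\mc})=\frac{c(M(G))+3\delta}{c(M^{*}(G))+\delta}$ tends to $\ratio(\mc)$ as $\delta\to 0^{+}$, so any sufficiently small $\delta$ (also satisfying $\delta<x_{2n}-x_1$) gives $\ratio(\hat{\mc})\geq\ratio(\mc)-\epsilon$. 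The main obstacle I expect is the bookkeeping for the edges ending at $x_{2n+2}$ and for the old edges incident to $x_{2n}$ whose matching partner has shrunk; both are controlled by the single observation that the original long edge $(x_1,x_{2n})$ dominates $M(G)$, which collapses the $\min$ in the original stability condition to the partner weight at the other endpoint.
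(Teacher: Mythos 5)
Your proof is correct and takes essentially the same approach as the paper: append two new vertices to the right end of the weighted line with small spacings, verify that $M(\hat{G})$ remains stable, and take the spacings to zero so the ratio converges to $\ratio(\mc)$. The only cosmetic difference is that the paper uses two independent spacing parameters $\delta' > \delta > 0$ (the gap $x_{2n}$--$y$ is $\delta$, the gap $y$--$y'$ is $\delta'$), whereas you collapse them to a single $\delta$; this is harmless because the instability definition uses a strict inequality, so equality $\delta' = \delta$ already keeps the edge $(x_{2n+1}, x_{2n+2})$ stable. Your proof also spells out the case analysis for stability in full detail, which the paper leaves implicit.
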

\begin{proof}
Recall that the vertices of $G$ are identified with the reals $x_1 < \ldots <
x_{2 n}$.
Let $\hat{G}$ be the weighted line graph obtained from $G$ by augmenting
$V(G)$ with two new vertices identified with the reals
$y = x_{2 n} + \delta$
and
$y' = y + \delta'$
for some sufficiently small $\delta' > \delta > 0$.
The assertion follows since by taking a sufficiently small $\delta$, we
guarantee that $M(\hat{G})$ is stable in $\hat{G}$, whereas by taking a
sufficiently small $\delta'$, we guarantee that
$c(M(\hat{G})) / c(M^{*}(\hat{G})) \geq \ratio(\mc) - \epsilon$.
\end{proof}

We now turn to present a family of metric graphs referred to as
\emph{Reingold-Tarjan graphs}, acknowledging Reingold and Tarjan's paper
\cite{RT81}, where these graphs were first introduced.
Consider some integer $k > 0$.
The $k^{\text{th}}$ Reingold-Tarjan graph $\grt^{k}$ is a weighted line graph
whose $2^{k}$ vertices are identified with the reals
$x_{1}^{k} < \cdots < x_{2^{k}}^{k}$.
It is defined recursively:
For $k = 1$, we set $x_{2}^{1} - x_{1}^{1} = 1$.
Assume that $\grt^{k}$ is already defined and let $D^{k} = x_{2^{k}}^{k} -
x_{1}^{k}$ be its \emph{diameter}.
Then, $\grt^{k + 1}$ is defined by placing $2$ disjoint instances of $\grt^{k}$
on the real line with an $S^{k + 1}$ \emph{spacing} between them, i.e.,
$x_{2^{k} + 1}^{k + 1} - x_{2^k}^{k + 1} = S^{k + 1}$,
yielding 
$D^{k + 1} = 2 \cdot D^{k} + S^{k + 1}$.
In the current\footnote{
A generalization of the Reingold-Tarjan graphs is presented
in \Sec{}~\ref{section:LowerBoundPoS}, where we use a different value for
$S^{k}$.
} construction, we set $S^{k} = D^{k - 1}$, thus the diameter of
$\grt^{k}$ satisfies
$D^{k} = 3^{k - 1}$.
Refer to \Figure{}~\ref{fig:bad-case metric} for an illustration.

\begin{figure}[htpb]
	\centering
	\fbox{\includegraphics{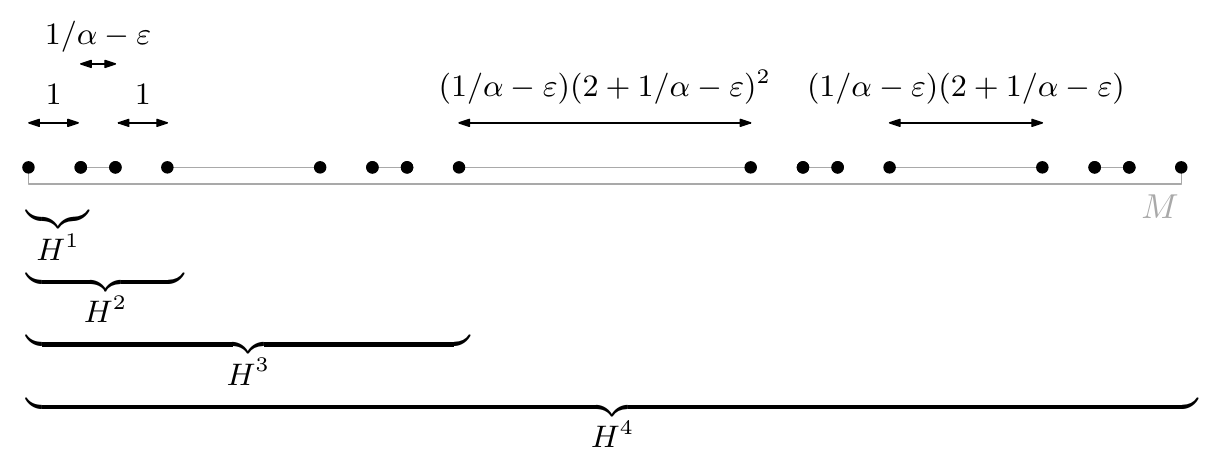}}
\caption{
This extended version of the Reingold-Tarjan graph $\grt^4$ with $2^4$ vertices
has a unique ``expensive'' $\alpha$-stable matching $M$.
Setting the optional parameters $\alpha$ and $\varepsilon$ (that are used in
the proof of the \PoS{} lower bound) to $1$ and $0$, respectively, yields the
original Reingold-Tarjan graph $\grt^4$.
}
\label{fig:bad-case metric}
\end{figure}

Recall that $M^*(\grt^{k})$ matches $x_{2 i - 1}^{k}$ with $x_{2 i}^{k}$ for
every $1 \leq i \leq 2^{k - 1}$;
since all these edges have weight $1$, it follows that $c(M^{*}(\grt^{k})) =
2^{k - 1}$.
Furthermore, we argue by induction on $k$ that the matching
$M(\grt^{k}) = \{ (x_{2 i}^{k}, x_{2 i + 1}^{k}) \mid 1 \leq i < 2^{k} \} \cup \{
(x_{1}^{k}, x_{2^{k}}^{k}) \}$
is stable;
whose cost is
$c(M(\grt^{k})) = D^{k} + (D^{k} - c(M^{*})) = 2 \cdot 3^{k - 1} - 2^{k - 1}$.
Therefore, $\mc_{RT}^{k} = (\grt^{k}, M^{*}(\grt^{K}), M(\grt^{*}))$, referred
to hereafter as the \emph{$k^{\text{th}}$ Reingold-Tarjan MC}, is a valid
weighted line MC with ratio
\[
\ratio(\mc_{RT}^{k}) =
\frac{c(M(\grt^{k}))}{c(M^{*}(\grt^{k}))} =
\frac{2 \cdot 3^{k - 1} - 2^{k - 1}}{2^{k - 1}} =
\Theta \left( (3 / 2)^{k - 1} \right) =
\Theta \left( n^{\log (3 / 2)} \right) \enspace ,
\]
where the last equation follows by setting $2 n = 2^{k}$.
Combined with Lemma~\ref{lemma:AddingPairs}, we immediately conclude that
$\PoA(2 n) = \Omega (n^{\log (3 / 2)})$, establishing the lower bound part of
Theorem~\ref{theorem:PoA}.
The upper bound part of the theorem is established by combining Lemmas
\ref{lemma:SingleAlternatingCycle}, \ref{lemma:CycleConfigurations},
\ref{lemma:LineConfigurations}, and \ref{lemma:AddingPairs} with the following
lemma.

\begin{lemma} \label{lemma:ReingoldTarjan}
The $k^{\text{th}}$ Reingold-Tarjan MC $\mc_{RT}^{k}$ satisfies
$\ratio(\mc_{RT}^{k}) \geq \ratio(\mc)$ for any weighted line MC $\mc$ on
$2^{k}$ vertices.
\end{lemma}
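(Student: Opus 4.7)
The plan is to prove the slightly stronger statement that $D/A \le n^{\log(3/2)}$ for every weighted line MC on $2n$ vertices (not only for $n = 2^{k-1}$), by strong induction on $n$; here $D := x_{2n} - x_1$ is the diameter and $A := c(M^*)$. Since $M$ consists of the wrap-around edge (of weight $D$) together with the interior edges $(x_{2i}, x_{2i+1})$, one has $c(M) = 2D - A$, hence $\ratio(\mc) = 2D/A - 1$, so for $n = 2^{k-1}$ the RT construction realizes $D/A = (3/2)^{k-1}$ with equality and the lemma will follow.

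For the inductive step, I would split the line at a heaviest interior $M$-edge: let $j^* \in \{1, \ldots, n-1\}$ maximize $b_{j^*} := x_{2j^*+1} - x_{2j^*}$, and take $G_L$, $G_R$ to be the sub-line-graphs on $\{x_1, \ldots, x_{2j^*}\}$ and $\{x_{2j^*+1}, \ldots, x_{2n}\}$ with canonical sub-matchings $M_L := (M \cap E(G_L)) \cup \{(x_1, x_{2j^*})\}$ and analogously $M_R$. The crucial structural claim is that the resulting $\mc_L, \mc_R$ are valid weighted line MCs on $2j^*$ and $2(n-j^*)$ vertices, i.e., that $M_L, M_R$ are stable on their respective sub-graphs. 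The only case demanding care is a non-$M_L$ edge $(x_{2j^*}, v)$ with $v$ interior to $G_L$: here $M_L(x_{2j^*}) = x_1$ has weight $D_L := x_{2j^*} - x_1$ rather than $b_{j^*}$, so the stability threshold at $x_{2j^*}$ apparently rises. Maximality of $b_{j^*}$ saves the argument: the partner $M_L(v) = M(v)$ has weight one of $b_1, \ldots, b_{j^*-1}$, hence at most $b_{j^*}$; and $D_L \ge b_{j^*}$ follows by applying $M$-stability to the non-$M$ edge $(x_1, x_{2j^*})$ (whose $M$-partners are $x_{2n}$ and $x_{2j^*+1}$, at weights $D \ge b_{j^*}$ and $b_{j^*}$). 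Together these give $\min\{D_L, w(v, M_L(v))\} = \min\{b_{j^*}, w(v, M(v))\} = w(v, M(v))$, so $M_L$-stability reduces to $M$-stability; symmetrically $D_R \ge b_{j^*}$ and $M_R$ is stable on $G_R$.

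Applying the induction hypothesis to $\mc_L, \mc_R$ then yields $D_L \le (j^*)^{\gamma} A_L$ and $D_R \le (n - j^*)^{\gamma} A_R$ with $\gamma := \log(3/2)$ and $A_L + A_R = A$; combined with $b_{j^*} \le \min(D_L, D_R)$,
\[
D \;=\; D_L + b_{j^*} + D_R \;\le\; D_L + D_R + \min(D_L, D_R).
\]
Dividing by $A$ and optimizing over $p := A_L / A \in (0, 1)$ with $D_L, D_R$ set to their induction-worst values reduces the desired bound $D/A \le n^{\gamma}$ to the inequality $q^{-\gamma} + (1-q)^{-\gamma} \ge 3$ for $q := j^*/n$, which follows from the convexity of $x \mapsto x^{-\gamma}$ by Jensen's inequality (using $2 \cdot 2^{\gamma} = 3$), with equality at $q = 1/2$; this matches the balanced RT construction and closes the induction. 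The main obstacle is precisely the stability verification for $M_L, M_R$: a naive split (for example at the vertex-count midpoint) generally destroys stability of the induced sub-matchings, so the argmax choice of split point is essential, and it is no coincidence that equality in the final Jensen bound is attained exactly at the balanced split $q = 1/2$.
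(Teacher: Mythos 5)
Your proof is correct, and it takes a genuinely different route from the paper's. The paper argues by induction on $k$ over powers of two: it splits the line at the vertex-count midpoint, uses the inductive hypothesis to justify repositioning the internal vertices of each half into a (scaled) Reingold--Tarjan configuration, and then optimizes over the relative scaling of the two halves to conclude that the balanced configuration is extremal. You instead prove the quantitative bound $D/A\le n^{\log(3/2)}$ for \emph{every} weighted line MC on $2n$ vertices by strong induction on $n$, splitting at a heaviest interior $M$-edge. The two payoffs of your route are real: (i) choosing the split point as the argmax is exactly what makes the induced sub-configurations $\mc_L,\mc_R$ genuinely stable (your verification via $D_L\ge b_{j^*}\ge b_i$ is correct, and your observation that a midpoint split can destroy stability of the induced sub-matchings is accurate --- this is a point the paper's own repositioning argument has to tiptoe around, since it invokes the inductive hypothesis on a half that need not itself be a valid MC); and (ii) you get the bound for all $n$, not just powers of two, so Lemma~\ref{lemma:AddingPairs} becomes unnecessary for the upper bound. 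The concluding optimization is also sound: $D\le D_L+D_R+\min(D_L,D_R)$ together with the worst-case values $D_L=(j^*)^\gamma A_L$, $D_R=(n-j^*)^\gamma A_R$ reduces (after maximizing the piecewise-linear concave function of $p$ at the crossing point, the boundary cases being trivial) to $q^{-\gamma}+(1-q)^{-\gamma}\ge 3$, which is Jensen for the convex map $x\mapsto x^{-\gamma}$ with $2\cdot 2^\gamma=3$. The identity $c(M)=2D-A$ then transfers the bound on $D/A$ to the ratio, matching $\ratio(\mc_{RT}^k)=2(3/2)^{k-1}-1$ exactly when $n=2^{k-1}$.
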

\begin{proof}
By induction on $k$.
The assertion holds trivially for $k = 1$, so assume that it holds for $k$
and consider an arbitrary weighted line MC $\mc = (G, M^{*}(G), M(G))$ on
$2^{k + 1}$ vertices identified with the reals $x_1 < \cdots < x_{2^{k + 1}}$.
Let $L$ and $R$ be the subgraphs of $G$ induced by the vertices $x_1, \dots,
x_{2^{k}}$ and $x_{2^{k} + 1}, \dots, x_{2^{k + 1}}$, respectively.
Let $e = (x_{2^{k}}, x_{2^{k} + 1})$ and let
$D_{L} = x_{2^{k}} - x_{1}$
and
$D_{R} = x_{2^{k + 1}} - x_{2^{k} + 1}$.
We refer to the vertices $x_{1}$ and $x_{2^{k}}$ (respectively, $x_{2^{k} +
1}$ and $x_{2^{k + 1}}$) as the \emph{external} vertices of $L$ (resp.,
$R$) and to the vertices $x_{2}, \dots, x_{2^{k} - 1}$ (resp., $x_{2^{k} + 2},
\dots, x_{2^{k + 1} - 1}$) as the \emph{internal} vertices of $L$ (resp.,
$R$).
Observe that $e \in M(G)$ and since $M(G)$ is a stable matching of $G$, we
must have $x_{2^{k} + 1} - x_{2^{k}} = w(e) \leq \min \{ D_{L}, D_{R} \}$ as
otherwise, at least one of the edges $(x_{1}, x_{2^{k}})$ or $(x_{2^{k} + 1},
x_{2^{k + 1}})$ is unstable. Figure~\ref{Figure:rt-proof} illustrates the various notions.

\begin{figure}
	\centering
	\fbox{\includegraphics{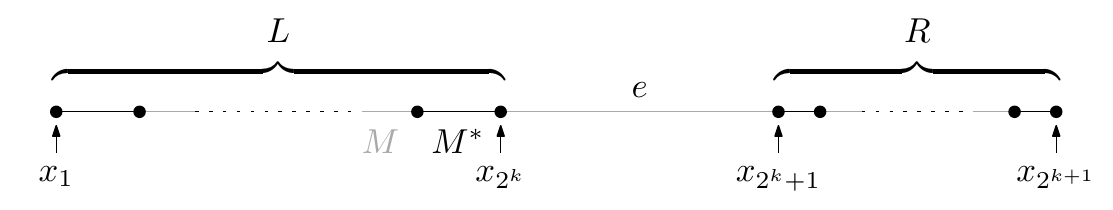}}
	\caption{Any MC $\mc$ on $2^k$ vertices can be transformed into a Reingold-Tarjan $MC$ without decreasing the ratio $\ratio(\mc)$. The black edges are part of the minimum-cost matching $M^*$ while the gray edges belong to the stable matching $M$.}
	\label{Figure:rt-proof}
\end{figure}

We say that a $2^{k}$-vertex weighted line graph is \emph{consistent} with
$\grt^{k}$ if it can be obtained from $\grt^{k}$ by scaling the edge weights.
Fixing the external vertices of $L$ and $R$, we argue that the internal
vertices of $L$ and $R$ can be repositioned so that $L$ and $R$, respectively,
become consistent with $\grt^{k}$ without violating the validity of $\mc$ as
a weighted line MC and without decreasing the ratio $\ratio(\mc)$.
We shall establish this fact for $L$; the proof for $R$ is analogous.
Note first that since $M(\grt^{k})$ is stable in $\grt^{k}$ and since $w(e)
\leq D_{L}$, it follows that by repositioning the internal vertices of $L$ so
that $L$ becomes consistent with $\grt^{K}$, we do not violate the stability of
$M(G)$.
Second, by the inductive hypothesis, repositioning the internal vertices of
$L$ so that $L$ becomes consistent with $\grt^{K}$ maximizes
$c(M(L)) / c(M^{*}(L))$,
thus $\ratio(\mc)$ cannot decrease after this repositioning step, which
establishes the argument.
So, assume hereafter that both $L$ and $R$ are consistent with $\grt^{k}$.

Assume without loss of generality that $D_{L} \geq D_{R}$, so $w(e) = x_{2^{k}
+ 1} - x_{2^{k}}$ is at most $D_{R}$.
In fact, since $R$ is consistent with $\grt^{k}$, it follows
that we can increase the difference $x_{2^{k} + 1} - x_{2^{k}}$ until it is
equal to $D_{R}$, keeping the difference $x_{i + 1} - x_{i}$ unchanged for all
other $i$s, without violating the validity of $\mc$ as a weighted line MC and
without decreasing the ratio $\ratio(\mc)$.
So, assume hereafter that $D_{L} \geq w(e) = D_{R}$.
Now, we argue that we can scale down the differences $x_{i + 1} - x_{i}$ for
every $1 \leq i < 2^{k}$, keeping $x_{i + 1} - x_{i}$ unchanged for all other
$i$s, until we obtain $D_{L} = w(e) = D_{R}$, without decreasing the ratio
$\ratio(\mc)$.
This completes the proof since $D_{L} = w(e) = D_{R}$ implies that $G =
\grt^{k + 1}$.

Let
$\ell = c(M(L)) - D_{L}$,
$\ell^{*} = c(M^{*}(L))$,
$r = c(M(R)) - D_{R}$, and
$r^{*} = c(M^{*}(R))$;
notice that
$\ell + \ell^{*} = D_{L}$ and
$r + r^{*} = D_{R}$.
Since $w(e) = D_{R}$, we can express $\ratio(\mc)$ as
\[
\ratio(\mc) =
\frac{c(M(G))}{c(M^{*}(G))} =
\frac{2 \ell + \ell^{*} + 2 (r + r^{*}) + 2 r + r^{*}}{\ell^{*} + r^{*}} =
\frac{2 \ell + \ell^{*} + 4 r + 3 r^{*}}{\ell^{*} + r^{*}} \enspace .
\]
Recalling that $D_{L} \geq D_{R}$, we express $D_{L}$ as
$D_{L} = (1 + \lambda) D_{R}$ for some $\lambda \geq 0$, and so
$\ell = (1 + \lambda) r$ and
$\ell^{*} = (1 + \lambda) r^{*}$.
Thus,
\[
\ratio(\mc) =
\frac{2 (1 + \lambda) r + (1 + \lambda) r^{*} + 4 r + 3 r^{*}}{(1 + \lambda)
r^{*} + r^{*}} =
\frac{(6 + 2 \lambda) r + (4 + \lambda) r^{*}}{(2 + \lambda) r^{*}} \enspace .
\]
Assuming that the edge weights in $G$ (as a whole) are scaled so that $R =
\grt^{K}$ (rather than merely being consistent with $\grt^{k}$), and recalling
the properties of $\mc_{RT}^{k}$, we get
\[
\ratio(\mc) =
\frac{(6 + 2 \lambda) (3^{k - 1} - 2^{k - 1}) + (4 + \lambda) 2^{k - 1}}{(2 +
\lambda) 2^{k - 1}} =
\left( \frac{6 + 2 \lambda}{2 + \lambda} \right) \cdot (3 / 2)^{k - 1} - 1 \enspace .
\]
The lemma follows since the function
$f(\lambda) = \frac{6 + 2 \lambda}{2 + \lambda}$ is monotonically decreasing
for $\lambda \geq 0$.
\end{proof}

\section{\boldmath$\alpha$-Price of Stability}
\label{section:PoS}

The upper bound established in \Sec{}~\ref{section:PoA} for the
\PoA{} clearly holds for the \PoS{} too;
the matching lower bound can be adapted to the \PoS{} by slightly modifying
the Reingold-Tarjan graphs so that they admit a unique stable matching
\LongVersion 
(see \Sec{}~\ref{section:LowerBoundPoS}),
\LongVersionEnd 
\ShortVersion 
(see \Appendix~\ref{appendix:LowerBoundPoS}),
\ShortVersionEnd 
implying that
$\PoS(2 n) = \Theta (n^{\log (3 / 2)})$.
So, the \PoS{} does not provide much of an improvement over the \PoA{}.
Consequently, we turn to analyze the \PoS{} with respect to relaxed stable
matchings, establishing the following theorem.

\begin{theorem} \label{theorem:PoS}
The $\alpha$-\PoS{} of minimum-cost perfect matchings in metric graphs with $2
n$ vertices is $\Theta (n^{\log (1 + 1 / (2 \alpha))})$.
In particular, taking $\alpha =\BigO (\log n)$ guarantees a constant \PoS{}.
\end{theorem}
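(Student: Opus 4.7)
The plan is to prove matching lower and upper bounds of $\Theta(n^{\log(1 + 1/(2\alpha))})$ on $\PoS_{\alpha}(2n)$, in each case by adapting the machinery of Section~\ref{section:PoA}. For the lower bound, I would generalize the Reingold-Tarjan construction by replacing the spacing $S^k = D^{k-1}$ with $S^k = D^{k-1}/\alpha$. The diameter recurrence becomes $D^k = 2 D^{k-1} + D^{k-1}/\alpha = (2 + 1/\alpha) D^{k-1}$, yielding $D^k = (2 + 1/\alpha)^{k-1}$. This spacing is chosen so that the ring matching $M(\grt^{k})$ is just barely $\alpha$-stable: each cross edge $e$ satisfies $\alpha \cdot w(e) = D^{k-1}$, which exactly matches the distance across the adjacent half. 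To upgrade from \PoA{} to \PoS{}, I would then apply small $\varepsilon$-perturbations to the edge weights as suggested by Figure~\ref{fig:bad-case metric}, making $M(\grt^{k})$ the \emph{unique} $\alpha$-stable matching. A direct computation analogous to that of $\mc_{RT}^{k}$ then gives ratio $2(1 + 1/(2\alpha))^{k-1} - 1 = \Theta(n^{\log(1 + 1/(2\alpha))})$.

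For the upper bound, I would first observe that Lemmas~\ref{lemma:SingleAlternatingCycle}--\ref{lemma:AddingPairs} extend verbatim to $\alpha$-stable matchings: each reduction only modifies edges outside $M$ (or increases the weight of the maximum-weight $M$-edge), which preserves the $\alpha$-stability inequality $\alpha \, w(u,v) \ge \min\{w(u, M(u)), w(v, M(v))\}$ on all non-$M$ edges. This reduces the task to exhibiting, for every $2^k$-vertex weighted line graph $G$, an $\alpha$-stable matching with ratio at most the generalized Reingold-Tarjan ratio $\Theta((1 + 1/(2\alpha))^{k-1})$. I would establish this by induction on $k$: split $G$ into halves $L$ and $R$; invoke the inductive hypothesis on each to obtain $\alpha$-stable matchings $M_L$ and $M_R$; return $M_L \cup M_R$ if it is $\alpha$-stable in $G$, and otherwise replace the two induced wrap-around edges by the combined wrap-around edge plus the level-$k$ cross edge (i.e.\ pass to the ring matching at the current scale). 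Replaying the arithmetic of Lemma~\ref{lemma:ReingoldTarjan} with the relaxed stability constraint $w(e) \le \alpha \cdot \min\{D_L, D_R\}$ in place of $w(e) \le \min\{D_L, D_R\}$ converts the per-level factor of $3/2$ into $1 + 1/(2\alpha)$.

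The main obstacle is the upper bound: for \PoS{}, unlike for \PoA{}, it is not enough to reduce to the worst-case $\alpha$-stable matching (in fact $\PoA_{\alpha}$ only grows with $\alpha$), so we must actually exhibit a specific low-cost $\alpha$-stable matching in an arbitrary weighted line graph. The core difficulty is to show that the recursive construction can always defer the costly ring merge to a coarser scale whenever the $\alpha$-stability slack permits, so that each level contributes a blowup of only $1 + 1/(2\alpha)$ (rather than $3/2$) to the final ratio---exactly matching the lower bound.
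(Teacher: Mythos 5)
Your lower bound is essentially the paper's. The paper sets the level-$k$ spacing to $(1/\alpha-\varepsilon)D^{k-1}_{\alpha}$ rather than exactly $D^{k-1}/\alpha$ --- precisely the perturbation you mention --- and then argues bottom-up that every spacing edge is forced into any $\alpha$-stable matching, so the expensive ring matching is the unique one; this gives $\Omega(n^{\log(1+1/(2\alpha))})$. That half of your plan is sound.

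The upper bound has a genuine gap. Lemmas~\ref{lemma:SingleAlternatingCycle}--\ref{lemma:AddingPairs} do \emph{not} extend to \PoS{}: each is a statement about a \emph{given} stable matching $M$, showing that the transformation preserves the $\alpha$-stability of that particular $M$ while not decreasing $c(M)/c(M^*)$ --- exactly what a \PoA{} (max over stable matchings) bound needs. A \PoS{} upper bound instead requires transferring the \emph{existence of a cheap $\alpha$-stable matching} from the transformed instance back to the original, and the transformations break this in both directions. For instance, Lemma~\ref{lemma:CycleConfigurations} raises the weight of a shortcut $(x,y)$: a cheap $\alpha$-stable matching $M'$ of the original graph with $(x,y)\in M'$ can lose its stability (raising $w(x,M'(x))$ makes edges at $x$ easier to destabilize) and gets more expensive, while a cheap $\alpha$-stable matching $M''$ of the modified graph with $(x,y)\notin M''$ can fail $\alpha$-stability once $w(x,y)$ is lowered back. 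Similarly, Lemma~\ref{lemma:SingleAlternatingCycle} discards vertices, and gluing cheap stable matchings of the pieces together can create unstable cross edges. So the reduction to weighted line graphs is unavailable, and your recursive half-merging construction --- itself only sketched at the step where $M_L\cup M_R$ fails to be $\alpha$-stable --- has nothing to act on for general metric graphs. The paper instead proves the upper bound constructively on arbitrary metric instances: \Greedy starts from $M^*$ and flips $\alpha$-unstable edges in order of non-decreasing weight (termination because flips only create heavier unstable edges), and the cost increase is charged to a flip forest whose virtual weights satisfy $\wb(x)=\wb(\hc{x})+(1+1/\alpha)\wb(\lc{x})$; an exchange argument over $\wb$-balanced complete binary trees bounds the worst-case ratio $\Wb(T)/\lw(T)$ by $\BigO(n^{\log(1+1/(2\alpha))})$. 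That machinery, or something playing its role, is missing from your proposal.
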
 

The upper bound promised by Theorem~\ref{theorem:PoS} is constructive, relying
on an efficient greedy algorithm presented in
\Sec{}~\ref{section:GreedyAlgorithm}.
\Sec{}~\ref{section:AnalysisSpecialCase} provides a simplified version of
the analysis of that greedy algorithm that holds only for the case  of $\alpha
= \BigO (\log n)$.
A more involved analysis that covers the general case is given in
\Sec{}~\ref{section:AnalysisGeneralCase}.
The matching $\Omega (n^{\log (1 + 1 / (2 \alpha))})$ lower bound on
$\PoS_{\alpha}(2 n)$ is established via a generalization of the
Reingold-Tarjan graphs
\ShortVersion 
and is deferred to \Appendix{}~\ref{appendix:LowerBoundPoS}.
\ShortVersionEnd 
\LongVersion 
in \Sec{}~\ref{section:LowerBoundPoS}.
\LongVersionEnd 
%

\subsection{Greedy Algorithm for \boldmath$\alpha$-Stable Matchings}
\label{section:GreedyAlgorithm}

The following algorithm called \Greedy transforms a minimum-cost matching $M^*$ in a metric graph into an
$\alpha$-stable matching $M$.

Start with the minimum-cost matching $M \gets M^*$ and iterate over all edges of $G$ by non-decreasing order of weights. If the edge $(u,v)$ currently considered is unstable with respect to the current matching $M$, set $M \gets M \cup \{ (u,v),\ (M(u), M(v)) \}  - \{ (u, M(u)),\ (v, M(v)) \}$ (this operation is called a \emph{flip} of the edge $(u,v)$) and continue with the next edge. After having iterated over all edges, return $M$.

We assume that edge weight ties are resolved in an arbitrary but consistent
manner.
In the following, we denote by $M_i$ the matching calculated by the above
algorithm at the end of iteration $i$.
Moreover, $M_0 = M^*$ is the initial minimum-cost matching and $M_G$ the final
matching returned by \Greedy. The following lemma
\ShortVersion 
(proof deferred to \Appendix{}~\ref{appendix:AdditionalProofsSectionPoS})
\ShortVersionEnd 
shows that the algorithm terminates.

\begin{lemma} \label{lemma:UnstableEdgeCreation}
For any unstable edge $b$ created by the flip of an edge $e$, we have $w(b) >
w(e)$.
\end{lemma}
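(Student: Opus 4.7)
Let $e=(u,v)$ be the edge being flipped and denote $a=M(u)$, $c=M(v)$, the $M$-partners just before the flip. Since the algorithm flips $e$, it is unstable w.r.t.\ $M$, so $\alpha w(u,v)<w(u,a)$ and $\alpha w(u,v)<w(v,c)$. The flip produces a matching $M'$ that differs from $M$ only in replacing $(u,a),(v,c)$ by $(u,v),(a,c)$. Therefore, if an edge $b$ is stable w.r.t.\ $M$ and unstable w.r.t.\ $M'$, then at least one endpoint of $b$ lies in $S=\{u,v,a,c\}$: otherwise both $w(\cdot,M(\cdot))$-values entering the stability condition for $b$ are unchanged by the flip.

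I split into two groups according to which vertex in $S$ is an endpoint of $b$, and will show that the group ``$b$ incident to $u$ or $v$'' is impossible, while the group ``$b$ incident to $a$ or $c$'' forces $w(b)>w(e)$. For the first group, consider $b=(u,y)$ with $y\ne v$ (the case $b=(v,y)$ is symmetric). Unstability of $b$ w.r.t.\ $M'$ gives $\alpha w(u,y)<w(u,M'(u))=w(u,v)$, and since $w(u,v)<w(u,a)/\alpha\le w(u,a)$, stability of $b$ w.r.t.\ $M$ forces the other side of the $\min$ to bind: $\alpha w(u,y)\ge w(y,M(y))$. For $y\notin\{u,v,a,c\}$ one has $M(y)=M'(y)$, which directly contradicts the unstability inequality $\alpha w(u,y)<w(y,M'(y))$. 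The subcase $y=a$ is vacuous: one checks $(u,a)\notin M'$ is stable w.r.t.\ $M'$, because $\alpha w(u,a)>\alpha^2 w(u,v)\ge w(u,v)$. The subcase $y=c$ simultaneously yields $w(u,c)>w(u,v)$ (from stability before, using $\min\{w(u,a),w(c,v)\}>\alpha w(u,v)$) and $w(u,c)<w(u,v)$ (from unstability after), which is impossible.

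For the second group, by the preceding paragraph I may restrict to $b=(a,y)$ with $y\notin\{u,v,a,c\}$ (and the symmetric $b=(c,y)$; the subcases $y\in\{u,v\}$ are handled as the special subcases of the first group). Then $M(y)=M'(y)$, so unstability of $b$ w.r.t.\ $M'$ forces $\alpha w(a,y)<w(y,M(y))$, and stability of $b$ w.r.t.\ $M$ therefore makes the $a$-side bind: $\alpha w(a,y)\ge w(a,M(a))=w(a,u)$. Combined with $w(a,u)>\alpha w(u,v)$, which is furnished by the instability of $e$, this gives
\[
w(b)=w(a,y)\ \ge\ w(a,u)/\alpha\ >\ w(u,v)\ =\ w(e),
\]
as required. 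The case $b=(c,y)$ is analogous, using $w(v,c)>\alpha w(u,v)$.

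The main obstacle I anticipate is the bookkeeping in the case split: because the $\alpha$-unstability condition is a minimum over the two endpoints and four vertices simultaneously change partners, in each subcase one has to pin down \emph{which} of the two stability terms is binding before and after the flip. Once that identification is made, the instability of $e$ slots in to yield either a contradiction or the desired strict inequality $w(b)>w(e)$.
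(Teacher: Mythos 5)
Your proof is correct and follows essentially the same route as the paper's: split into the case where the new unstable edge $b$ shares a vertex with the flipped edge $e$ (deriving a contradiction with stability before the flip) and the case where $b$ shares a vertex with the replacement edge $(M(u),M(v))$ (yielding $w(b)>w(e)$), using the instability of $e$ to relate $w(e)$ to the weights of the removed edges. You are somewhat more explicit than the paper in checking the degenerate subcases where both endpoints of $b$ lie among the four vertices touched by the flip, but the underlying argument is the same.
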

\newcommand{\ProofLemmaUnstableEdgeCreation}{
We consider the edge $e = (u,v)$ being flipped and we denote by $e' = (M(u),
M(v))$ the second new edge joining $M$ as a result of the flip.
The two edges that are removed by the flip are denoted by $f$ and $g$.
See \Figure{}~\ref{fig:flip-creates-unstable-edges} for an illustration of the situation.
	
	\begin{figure}[htpb]
		\centering
		\fbox{\includegraphics{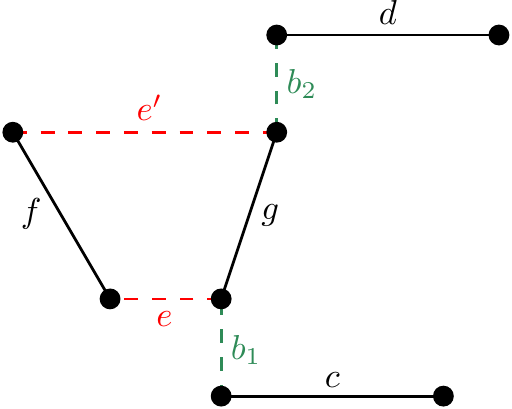}}
		\caption{This figure illustrates the two different cases of Lemma~\ref{lemma:UnstableEdgeCreation}.}
		\label{fig:flip-creates-unstable-edges}
	\end{figure}
	
	When an edge $e$ is flipped, there are essentially two different cases for an unstable edge to be created. Either the unstable edge contains one vertex of $e$ or one vertex of $e'$. No other vertices are involved in the flip and thus every \emph{new} unstable edge has to contain at least one of the four vertices. We assume without loss of generality that a vertex of the edge $g$ is incident to the unstable edge created by the flip.
	
	Let us first consider the case where a vertex of $e$ is incident to the new unstable edge.
	This case is denoted as the edge $b_1$ in \Figure{}~\ref{fig:flip-creates-unstable-edges}. We assume that $b_1$ is stable before the flip and unstable thereafter. For $b_1$ to be unstable after the flip, we must have $\alpha \cdot w(b_1) < w(e)$ and $\alpha \cdot w(b_1) < w(c)$. But as $e$ is unstable before the flip, we have $\alpha \cdot w(e) < w(g)$ and thus we get $\alpha \cdot w(b_1) < w(e) < w(g)/\alpha \leq w(g)$.
	This means that $b_1$ was already unstable before the flip, which is a contradiction to the assumption. Hence, no vertex of $e$ can be part of the new unstable edge.
	
Let us now consider the case, where a vertex from $e'$ is part of the
new unstable edge ($b_2$ in \Figure{}~\ref{fig:flip-creates-unstable-edges}).
Since $b_2$ is stable before the flip and unstable after it, we must have
$w(g) \leq \alpha \cdot w(b_2) < w(e')$.
But as $e$ is unstable before the flip, we have $\alpha \cdot w(e) < w(g)$,
and thus we get
$w(e) < w(g) / \alpha \leq w(b_2)$
which completes the proof.
} 
\LongVersion 
\begin{proof}
\ProofLemmaUnstableEdgeCreation{}
\end{proof}
\LongVersionEnd 

Corollary~\ref{corollary:unstable edges are heavier} follows by induction on
$i$.
\ShortVersion 
Lemma~\ref{lemma:OutputsValidMatching} (proof deferred
to \Appendix{}~\ref{appendix:AdditionalProofsSectionPoS}) then follows by a
straightforward analysis of the algorithm's run-time.
\ShortVersionEnd 

\begin{corollary} \label{corollary:unstable edges are heavier}
Let $e_i$ be the edge considered in iteration $i$.
Then $w(e_i) < w(b)$ for any unstable edge $b$ in $M_{i}$.
\end{corollary}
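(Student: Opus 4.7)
The plan is to prove the corollary by induction on the iteration counter $i$, using Lemma~\ref{lemma:UnstableEdgeCreation} to handle any unstable edges that arise as a consequence of flips. The base case $i=1$ is handled as follows: since the algorithm processes edges in non-decreasing order of weight, $e_1$ is a minimum-weight edge of $G$. If $e_1$ is stable with respect to $M_0 = M^*$, then $M_1 = M_0$, and any unstable edge $b$ in $M_1$ must differ from $e_1$, so $w(e_1) < w(b)$ by the weight order together with the paper's consistent tie-breaking convention. If instead $e_1$ is flipped, then the only possible unstable edges in $M_1$ are new ones created by the flip, and Lemma~\ref{lemma:UnstableEdgeCreation} yields $w(b) > w(e_1)$ directly.

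For the inductive step, assume the claim for all $j < i$. Every unstable edge $b$ in $M_i$ falls into one of two categories: (a) $b$ was already unstable in $M_{i-1}$ and remains unstable in $M_i$, or (b) $b$ became unstable only as a consequence of the flip performed in iteration $i$ (so $e_i$ must have been unstable with respect to $M_{i-1}$ and was flipped). For case (a), the inductive hypothesis applied at iteration $i-1$ gives $w(e_{i-1}) < w(b)$; since $b$ is unstable in $M_i$ we have $b \notin M_i$, and in particular $b \neq e_i$ (either $e_i$ entered $M_i$ via the flip, or $e_i$ was stable with respect to $M_{i-1} = M_i$ and is therefore not among the unstable edges). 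The non-decreasing processing order together with the tie-breaking convention then gives $w(e_i) < w(b)$. For case (b), I would simply invoke Lemma~\ref{lemma:UnstableEdgeCreation} with the flipped edge $e = e_i$, obtaining $w(b) > w(e_i)$ immediately.

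The main (very modest) obstacle is the bookkeeping for case (a) when $w(b) = w(e_i)$: this can only happen if the tie-breaking rule places $b$ strictly after $e_i$ in the processing order, which is what the strict inequality should be read as under the paper's consistent tie-breaking convention. All of the quantitative content of the corollary is already encapsulated in Lemma~\ref{lemma:UnstableEdgeCreation}, so the induction itself is essentially a two-case bookkeeping argument with no additional calculation required.
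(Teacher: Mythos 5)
Your proof is correct and follows exactly the route the paper intends (the paper itself only says the corollary ``follows by induction on $i$'' from Lemma~\ref{lemma:UnstableEdgeCreation}): newly created unstable edges are handled by that lemma, persisting ones by the inductive hypothesis plus the processing order, with the strict inequality read in the tie-broken sense as you note. Two small points worth tightening: in case (a) you should state explicitly that $e_i$ is the \emph{immediate} successor of $e_{i-1}$ in the (tie-broken) processing order --- this is the reason $w(b) > w(e_{i-1})$ together with $b \neq e_i$ yields $w(b) > w(e_i)$, and without it the step does not follow --- and your base-case claim that a flip of $e_1$ can leave only \emph{newly created} unstable edges is not accurate (an edge already unstable with respect to $M_0$ may remain unstable), although the base case is trivial anyway because $e_1$ is first in the processing order.
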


\begin{lemma} \label{lemma:OutputsValidMatching}
\Greedy transforms a minimum-cost matching into a valid
$\alpha$-stable matching in time $\BigO(n^2 \log n)$.
\end{lemma}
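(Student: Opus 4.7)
The plan is to establish two claims: (a) $M_G$ is a valid (perfect) $\alpha$-stable matching, and (b) the algorithm runs in $\BigO(n^2 \log n)$ time.

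To show that $M_G$ is perfect, I would argue by induction on the iteration index $i$. The base case $M_0 = M^{*}$ is perfect by hypothesis. For the inductive step, a flip is performed only on an edge $(u,v) \notin M_{i-1}$ that is currently unstable, so the four vertices $u, v, M_{i-1}(u), M_{i-1}(v)$ are pairwise distinct (any coincidence would contradict either the matching property of $M_{i-1}$ or the fact that $(u,v) \notin M_{i-1}$). The flip replaces two matching edges on these four vertices with two different edges on the same four vertices, so $M_i$ remains a perfect matching.

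The $\alpha$-stability of $M_G$ follows by a single application of Corollary~\ref{corollary:unstable edges are heavier}. Let $T$ denote the index of the last iteration, so $M_G = M_T$. Suppose toward contradiction that some edge $b \in E(G)$ is $\alpha$-unstable with respect to $M_T$. The corollary then yields $w(b) > w(e_T)$. However, since the algorithm iterates over \emph{all} edges of $G$ in non-decreasing order of weight, $e_T$ is a heaviest edge of $G$, so no $b \in E(G)$ can satisfy $w(b) > w(e_T)$, the desired contradiction.

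For the runtime, the analysis is routine. Sorting the $\binom{2n}{2} = \Theta(n^2)$ edges by weight takes $\BigO(n^2 \log n)$ time. By maintaining the current matching as an array $M[\cdot]$ indexed by vertices, the partner of any vertex can be retrieved in $\BigO(1)$ time, and each flip performs a constant number of such lookups and updates. The main loop therefore takes $\BigO(n^2)$ time, and the overall runtime is dominated by the sorting step, giving $\BigO(n^2 \log n)$.

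The only conceptually delicate point is recognizing that Corollary~\ref{corollary:unstable edges are heavier} applied with $i = T$ immediately rules out any unstable edge in $M_G$: since $e_T$ has maximum weight in $G$, the inequality $w(b) > w(e_T)$ is unsatisfiable. Beyond that observation, the lemma follows from routine bookkeeping, so no step poses a substantial obstacle.
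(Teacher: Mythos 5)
Your proof is correct and follows essentially the same route as the paper: the runtime is dominated by sorting the $\Theta(n^2)$ edges, and $\alpha$-stability is deduced from Corollary~\ref{corollary:unstable edges are heavier} together with the observation that the last edge considered has maximum weight. You also explicitly verify that each flip preserves the perfect-matching property, a point the paper leaves implicit; this is a harmless and reasonable addition.
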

\newcommand{\ProofLemmaOutputsValidMatching}{
The running time of the algorithm is dominated by the step of sorting
the edges in $G$ according to their weight. This takes $\BigO(n^2 \log
n)$ steps. The second phase --- the actual algorithm --- runs in $\BigO(n^2)$
steps since it iterates once over all edges in $V \times V$ and each
iteration takes $\BigO (1)$ time.
	
The correctness of the algorithm is established by
Corollary~\ref{corollary:unstable edges are heavier} since it states that in
the last iteration, all unstable edges have strictly larger weight than the
edge currently considered. Since this edge is already the one with the largest
weight, there cannot be any unstable edges in the final matching $M_G$.
} 
\LongVersion 
\begin{proof}
\ProofLemmaOutputsValidMatching{}
\end{proof}
\LongVersionEnd 

\subsection{Cost Analysis}
\label{section:AnalysisSpecialCase}

In this section, we want to bound the cost of the $\alpha$-stable matching
returned by \Greedy relative to the cost of $M^{*}$.
To this end, we will transcribe the changes that \Greedy 
performs on the minimum-cost matching through a collection of logical
rooted trees, referred to as the \emph{flip forest}, and assign weights
to the nodes of the trees in this forest that will then allow us to derive an
upper bound on the cost of the $\alpha$-stable matching returned by the
algorithm. 

Since this section makes heavy use of rooted binary trees and their
properties, we require a few definitions.
In a \emph{full binary tree}, each inner node has exactly two children.
The \emph{depth} $d(v)$ of a node $v$ in a tree $T$ is the length of the
unique path from the root of $T$ to $v$ and the \emph{height} $h(T)$ of a tree
$T$ is defined as the maximal depth of any node in $T$.
The \emph{height} $h(v)$ of a node $v$ of $T$ is defined to be the height of its subtree.
The \emph{leaf set} $\mathcal L(T)$ or $\mathcal L(F)$ of a tree $T$ or a collection $F$ of trees
is the set of all leaves in $T$ or $F$, respectively.
The \emph{leaf set} $\mathcal L(v)$ of a node $v$ in a tree is $\mathcal L(T_v)$ where $T_v$ is the subtree rooted at $v$.
Finally, two nodes with the same parent are called \emph{sibling nodes}.

We begin with Lemma~\ref{lemma:EdgeUnstableOnlyOnce}
\ShortVersion 
(proof deferred to \Appendix{}~\ref{appendix:AdditionalProofsSectionPoS})
\ShortVersionEnd 
stating an important property of the edges that are flipped by \Greedy.

\begin{lemma} \label{lemma:EdgeUnstableOnlyOnce}
If an edge $e$ is flipped in iteration $i$, then $e \in M_j$ for all $j \geq
i$ and in particular $e \in M_G$.
\end{lemma}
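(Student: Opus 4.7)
The plan is to argue by contradiction, exploiting the fact that the only way an edge can leave the current matching is by being removed as part of a flip of an adjacent edge, combined with the non-decreasing order in which \Greedy considers edges.

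First I would unpack the update rule: the only modification \Greedy makes to $M$ is a flip, and a flip of $(u,v)$ removes precisely the two edges $(u, M(u))$ and $(v, M(v))$ (while inserting $(u,v)$ and $(M(u), M(v))$). Hence, if $e \in M_i$ but $e \notin M_j$ for some smallest $j > i$, then at iteration $j$ some edge $e_j$ is flipped and $e$ equals either $(u, M_{j-1}(u))$ or $(v, M_{j-1}(v))$, where $e_j = (u, v)$.

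Now I would invoke the condition that made $e_j$ eligible to be flipped in iteration $j$, namely that $e_j$ is $\alpha$-unstable with respect to $M_{j-1}$. Since $e$ is the $M_{j-1}$-partner edge of one endpoint of $e_j$, this gives the strict inequality
\[
\alpha \cdot w(e_j) \;<\; w(e) \;=\; w(e_i).
\]
Because $\alpha \geq 1$, this forces $w(e_j) < w(e_i)$. But \Greedy processes edges in non-decreasing order of weight, and the consistent tie-breaking rule makes this a strict total order on processing; hence $j > i$ implies $w(e_j) \geq w(e_i)$, a contradiction.

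The argument is short and the only subtlety I anticipate is handling the case of equal weights, but the strictness of the $\alpha$-unstability inequality ($\alpha \cdot w(e_j) < w(e)$) together with $\alpha \geq 1$ takes care of this: even if $w(e_j) = w(e_i)$ we still reach a contradiction with $\alpha \geq 1$. No appeal to Corollary~\ref{corollary:unstable edges are heavier} is strictly needed, although it can be cited to phrase the final contradiction in the language already introduced.
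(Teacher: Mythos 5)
Your proof is correct and follows essentially the same route as the paper's: assume the edge is later removed, observe that it must be removed by a flip of an adjacent edge, derive from the $\alpha$-unstability condition that the later-flipped edge is strictly lighter, and contradict the non-decreasing processing order of \Greedy. The only cosmetic difference is that you pin down the \emph{smallest} such iteration $j$ while the paper simply picks some iteration $k$ with $i < k \leq j$ where $e$ is removed; the tie-breaking discussion at the end is unnecessary since the strict inequality $\alpha \cdot w(e_j) < w(e)$ already rules out $w(e_j) = w(e)$.
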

\newcommand{\ProofLemmaEdgeUnstableOnlyOnce}{
Let us assume for the sake of contradiction that $e = (u,v)$ was flipped in iteration $i$ of the algorithm and further that $(u, v) \notin M_j$ for some $j > i$. According to the algorithm, we have $(u,v) \in M_i$. Since $(u,v) \notin M_j$, there has to exist an iteration $k$ with $i < k \leq j$ where $(u,v)$ is removed from $M_{k-1}$ such that $(u,v) \notin M_k$. For this to happen, either edge $(u, u')$ or $(v, v')$ for some vertex $u'$ or $v'$ must be flipped in iteration $k$ because it was unstable with respect to $M_{k-1}$. Without loss of generality, we assume that $(u, u')$ is unstable with respect to $M_{k-1}$ and flipped in iteration $k > i$ and we have
\begin{MathMaybe}
	w(u, u') ~ \leq ~ \alpha \cdot w(u, u') < w(u, v)  \enspace .
\end{MathMaybe}
But this means that \Greedy would have considered the edge $(u,u')$ before considering the edge $(u,v)$, a contradiction to the assumption.
} 
\LongVersion 
\begin{proof}
\ProofLemmaEdgeUnstableOnlyOnce{}
\end{proof}
\LongVersionEnd 

Consider an iteration of \Greedy where edge $(u,v)$ is flipped
because it was unstable at the beginning of the iteration.
Then the two edges $(u, M(u))$ and $(v, M(v))$ are replaced by $(u, v)$ and
$(M(u), M(v))$.
Since, according to Lemma~\ref{lemma:EdgeUnstableOnlyOnce}, the edge $(u,
v)$ is selected irrevocably, the edges $(u, M(u))$ and $(v, M(v))$ can never
be part of $M$ again.
The only edge, of the four edges involved, that may be changed again, is the
edge $(M(u), M(v))$.
Thus, we refer to $(M(u), M(v))$ as an \emph{active} edge.
We also refer to all edges in $M_{0}$ as active.
Using the notion of active edges, we shall now model the changes that \Greedy
applies to the matching during its execution through a logical helper structure
called the flip forest.

\begin{definition}[Flip Forest]
The \emph{flip forest} $F = (U,K)$ for a certain execution of \Greedy is a
collection of rooted trees with node set $U$ and link set $K$.
It contains a node $u_e \in U$ corresponding to each edge $e \in V \times V$
that has been active at some stage during the execution.
This correspondence is denoted by $u_e \cor e$.
For each flip of an edge $(u,v)$ in $G$, resulting in the removal of the edges
$(u, M(u))$ and $(v, M(v))$ from $M$, $K$ contains a link connecting the node
$y \cor (u, M(u))$ to its parent $x \cor (M(u), M(v))$ and a link connecting
the node $z \cor (v, M(v))$ to its parent $x \cor (M(u), M(v))$.
(Observe that by definition, all three edges $(u, M(u))$, $(v, M(v))$, and
$(M(u), M(v))$ are active.)
Refer to \Figure{}~\ref{fig:tree segment} for an illustration.
\end{definition}

\begin{figure}
\centering
\fbox{\includegraphics{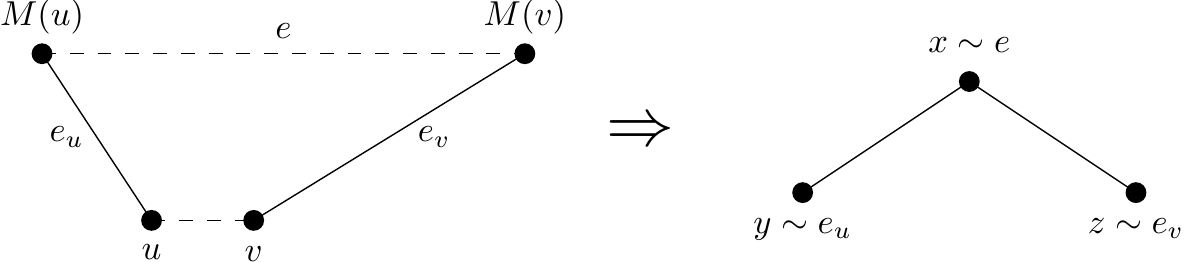}}
\caption{
The left side shows a matching configuration with an unstable edge $(u,v)$,
which will be flipped by \Greedy.
This flip is then represented by the flip tree segment on the right, which
depicts the replacement of the two active edges $(u,M(u)) \cor y$ and $(v,
M(v)) \cor z$ by the active edge $(M(u), M(v)) \cor x$.}
\label{fig:tree segment}
\end{figure}

To avoid confusion between the basic elements of $G$ and the basic elements of
$F$, we refer to the former as vertices/edges and to the latter as
nodes/links.

The definition of a flip forest ensures that for each flip of the
algorithm, we obtain a binary \emph{flip tree segment} as depicted by
\Figure{}~\ref{fig:tree segment}.
When we transcribe each flip operation of the complete execution of 
\Greedy into a flip tree segment as explained above, we end up with
a collection of full binary trees --- a forest as depicted in
\Figure{}~\ref{fig:flip forest}.
This is because the parent node of a tree segment may appear as a child node
of the tree segment corresponding to a later iteration of the algorithm since
its corresponding edge is still active and therefore may participate in
another flip.
Each such tree is called a \emph{flip tree} hereafter.
Observe that all leaves in the flip forest correspond to edges in the
minimum-cost matching $M_0 = M^*$.

\begin{figure}
\centering
\fbox{\includegraphics{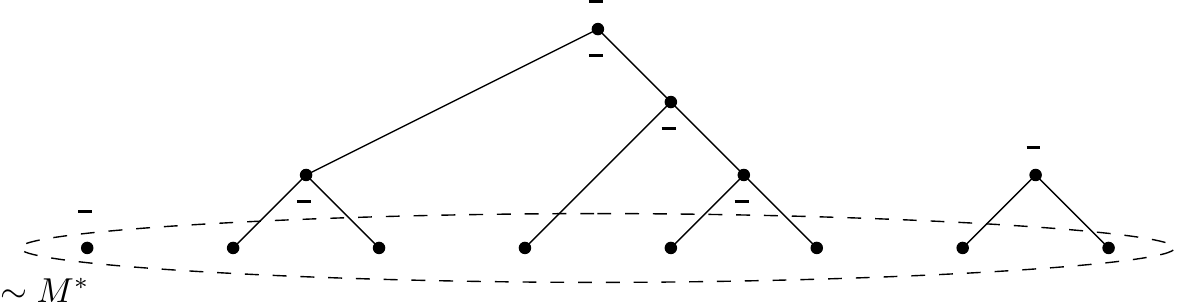}}
\caption{All leaves and isolated nodes of the flip forest $F$ correspond to edges in the minimal-cost matching $M^*$. Each inner node corresponds to the active edge that resulted from the respective flip. Note that the edge that got flipped and is therefore irrevocably selected into $M_G$ has no corresponding node in $F$. For the purpose of illustration, we can associate such an edge with the respective node as indicated by a line \emph{below} the respective inner node. These edges constitute the matching $M_G$ together with the edges corresponding to isolated vertices and roots, indicated by a line \emph{above} the node.
	}
  \label{fig:flip forest}
\end{figure}

We now define a function $\wb : U \mapsto \mathbb R$ that maps a \emph{virtual
weight} to each node in the flip forest $F$ as follows.
For each leaf $\ell$ of a flip tree in $F$, we set $\wb(\ell) \coloneqq w(e)$,
where $\ell \cor e$ and we recall that an edge corresponding to a leaf node in
$F$ is part of $M^*$.
The function $\wb$ is extended to an inner node $x$ of a flip tree with child
nodes $y$ and $z$ by the recursion 
\begin{equation}
	\wb(x) \coloneqq \wb(y) + \wb(z) + (1/\alpha) \cdot \min \{\wb(y), \wb(z) \} \enspace . \label{eq:wb definition} 
\end{equation}
For the ease of argumentation, we call the child with smaller (respectively,
larger) value of $\wb$ as well as the link leading to its parent \emph{light}
(resp., \emph{heavy}).
We denote the light child of a node $x$ as $\lc{x}$ and the heavy child as
$\hc{x}$.
Then we can rewrite the recursion from \Equation{}~\ref{eq:wb definition} as 
\begin{MathMaybe}
	\wb(x) \coloneqq \wb(\hc{x}) + (1+1/\alpha) \cdot \wb(\lc{x}) \enspace .
\end{MathMaybe}

\begin{lemma} \label{lemma:FlipEdgeWeightBound}
Let $x$ be a node in $F$ and $e$ an edge in $G$ with $x \cor e$.
Then $w(e) \leq \wb(x)$.
\end{lemma}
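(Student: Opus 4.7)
The plan is to proceed by induction on the height $h(x)$ of the node $x$ in the flip forest $F$.

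For the base case $h(x) = 0$, the node $x$ is a leaf, so by definition $\wb(x) = w(e)$ where $x \cor e$, and the bound holds with equality.

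For the inductive step, let $x$ be an inner node with children $y$ and $z$. By the construction of $F$, node $x$ arose from the flip of some edge $(u,v)$ at some iteration $i$, where $M$ denotes the matching $M_{i-1}$ at the start of that iteration. Then $x \cor (M(u), M(v))$, $y \cor (u, M(u))$, and $z \cor (v, M(v))$. I would combine three ingredients. First, since $G$ is metric, the triangle inequality applied twice along the path $M(u) - u - v - M(v)$ gives
\[
w(M(u), M(v)) \;\leq\; w(u, M(u)) + w(u,v) + w(v, M(v)) \enspace .
\]
Second, because $(u,v)$ was $\alpha$-unstable with respect to $M$ at the moment of the flip, we have $\alpha \cdot w(u,v) < \min\{w(u, M(u)), w(v, M(v))\}$, hence $w(u,v) < (1/\alpha) \cdot \min\{w(u, M(u)), w(v, M(v))\}$. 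Third, the inductive hypothesis applied to $y$ and $z$ (whose heights are strictly smaller than $h(x)$) yields $w(u, M(u)) \leq \wb(y)$ and $w(v, M(v)) \leq \wb(z)$.

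Plugging these in and using the componentwise monotonicity of $\min$ (so that $\min\{w(u, M(u)), w(v, M(v))\} \leq \min\{\wb(y), \wb(z)\}$), I obtain
\[
w(M(u), M(v)) \;\leq\; \wb(y) + \wb(z) + (1/\alpha) \cdot \min\{\wb(y), \wb(z)\} \;=\; \wb(x) \enspace ,
\]
which completes the induction. The main conceptual point — and essentially the only subtlety — is the combined use of the triangle inequality with the instability threshold at the precise iteration of the flip; the monotonicity-of-min step is routine but worth spelling out, since the children of $x$ with smallest weight and smallest virtual weight need not be the same node.
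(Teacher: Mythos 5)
Your proof is correct and follows essentially the same route as the paper: induction on the height of $x$, combining the triangle inequality along $M(u)-u-v-M(v)$ with the $\alpha$-instability condition at the moment of the flip and the inductive hypothesis on the two children. The only cosmetic difference is that you keep the $\min$ throughout and invoke its monotonicity, whereas the paper names the light/heavy children and bounds $\min\{w(e_u),w(e_v)\}$ by the weight of the edge corresponding to the light child; both handle correctly the subtlety you point out.
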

\begin{proof}
We prove the statement by induction over the height of $x$ in its flip tree.
The assertion holds for every leaf $x \cor e$ in the flip forest as $\wb(x) =
w(e)$ by definition.
Assume that the statement holds for the two children $\lc{x}$ and $\hc{x}$ of
a node $x$ that represents a flip of the edge $(u, v)$.
Then $x \cor (M(u), M(v)) = e$ and we assume without loss of generality that
$\hc{x} \cor (u, M(u)) = e_u$ and
$\lc{x} \cor (v, M(v)) = e_v$.
Thus,
$w(e_u) \leq \wb(\hc x)$ and
$w(e_v) \leq \wb(\lc x)$.
This flip tree segment represents the replacement of the edges $e_u$ and $e_v$
by $e$ and $(u,v)$, which happened because the edge $(u, v)$ was unstable with
respect to $M$, that is, $\alpha \cdot w(u,v) < \min \{ w(e_v), w(e_u) \}$.
Since $G$ is metric, we can bound $w(e)$ as
\ShortVersion
\begin{equation*}
	w(e) ~ \leq ~ w(e_u) + w(e_v) + w(u,v) ~ < ~ w(e_u) + (1+1/\alpha)
\cdot w(e_v) ~ \leq ~ \wb(\hc{x}) + (1+1/\alpha) \cdot \wb(\lc{x}) ~ = ~
\wb(x) \, ,
\end{equation*}
where the last inequality holds by the inductive hypothesis.
\ShortVersionEnd
\LongVersion
\begin{align*}
	w(e) ~ &\leq ~ w(e_u) + w(e_v) + w(u,v) \\
	&< ~ w(e_u) + (1+1/\alpha) \cdot w(e_v) \\
	&\leq ~ \wb(\hc{x}) + (1+1/\alpha) \cdot \wb(\lc{x}) \tag*{\hspace{-2cm}(inductive hypothesis)} \\
	&= ~ \wb(x) \enspace . \tag*{\qedhere}
\end{align*}
\LongVersionEnd
\end{proof}

\begin{definition}[Light Depth]
The \emph{light depth} $\lambda(x)$ of a node $x$ in a flip forest $F$ is the
number of light links on the direct path from $x$ to the root of the flip tree
containing $x$.
\end{definition}

\ShortVersion
Lemma~\ref{Lemma:NodeWeightBound} (proof deferred to \Appendix{}~\ref{appendix:AdditionalProofsSectionPoS}) relates the virtual weight of an inner node of a flip tree to the virtual weights of the leaves of its subtree.
\ShortVersionEnd

\begin{lemma} \label{Lemma:NodeWeightBound}
Every node $x$ in a flip tree satisfies
\begin{MathMaybe}
\wb(x) ~ = ~ \sum_{\ell \in \mathcal L(x)} (1+1/\alpha)^{\gl(\ell)-\gl(x)}
\cdot \wb(\ell) \enspace .
\end{MathMaybe}
\end{lemma}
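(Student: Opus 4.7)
The plan is to prove the identity by straightforward induction on the height $h(x)$ of the node $x$ in its flip tree, unrolling the recursive definition of $\wb$ layer by layer. The key observation enabling the bookkeeping is that the exponent $\lambda(\ell)-\lambda(x)$ is exactly the number of light links encountered on the (downward) path from $x$ to $\ell$, which is the quantity that the recursion accumulates.

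For the base case $h(x)=0$, the node $x$ is itself a leaf, so $\mathcal{L}(x)=\{x\}$ and the sum on the right-hand side collapses to $(1+1/\alpha)^{0}\cdot\wb(x)=\wb(x)$. For the inductive step, assume the claim for nodes of height less than $h(x)$ and apply the hypothesis separately to $\lc{x}$ and $\hc{x}$. The crucial auxiliary fact, which I would state and verify in one short sentence, is that $\lambda(\lc{x})=\lambda(x)+1$ (the link to $\lc{x}$ is light, so it contributes one extra light link on the path to the root) while $\lambda(\hc{x})=\lambda(x)$ (the link to $\hc{x}$ is heavy). Substituting the inductive expressions into $\wb(x)=\wb(\hc{x})+(1+1/\alpha)\cdot\wb(\lc{x})$, the heavy subtree contributes
\begin{equation*}
\sum_{\ell\in\mathcal{L}(\hc{x})}(1+1/\alpha)^{\lambda(\ell)-\lambda(\hc{x})}\wb(\ell)=\sum_{\ell\in\mathcal{L}(\hc{x})}(1+1/\alpha)^{\lambda(\ell)-\lambda(x)}\wb(\ell),
\end{equation*}
while the light subtree, after absorbing the prefactor $(1+1/\alpha)$, contributes
\begin{equation*}
(1+1/\alpha)\sum_{\ell\in\mathcal{L}(\lc{x})}(1+1/\alpha)^{\lambda(\ell)-\lambda(\lc{x})}\wb(\ell)=\sum_{\ell\in\mathcal{L}(\lc{x})}(1+1/\alpha)^{\lambda(\ell)-\lambda(x)}\wb(\ell),
\end{equation*}
where in the last equality I used $\lambda(\lc{x})=\lambda(x)+1$ to convert the extra $(1+1/\alpha)$ factor into a bump of the exponent. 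Adding the two contributions and using $\mathcal{L}(x)=\mathcal{L}(\lc{x})\sqcup\mathcal{L}(\hc{x})$ (which holds because the flip forest is a full binary forest) yields exactly the claimed formula.

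There is no real obstacle here: the lemma is essentially a telescoping of the recursion $\wb(x)=\wb(\hc{x})+(1+1/\alpha)\wb(\lc{x})$, with the exponent tracking how many of the ``light'' weight-amplifying factors were picked up on the way from $x$ down to each leaf $\ell$. The only point that merits an explicit sentence in the write-up is the behavior of $\lambda$ under passing from a node to its two children, since the definition of $\lambda$ is phrased in terms of paths going upward to the root rather than downward from $x$.
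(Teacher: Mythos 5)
Your proof is correct and follows exactly the paper's argument: induction on the height of $x$, unrolling the recursion $\wb(x)=\wb(\hc{x})+(1+1/\alpha)\,\wb(\lc{x})$ and using $\gl(\lc{x})=\gl(x)+1$ and $\gl(\hc{x})=\gl(x)$ to merge the two sums. No gaps.
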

\newcommand\ProofLemmaNodeWeightBound{
We prove the statement by induction over the height of $x$ in its flip tree.
The statement holds for a leaf node $x$ since then we have $\mathcal L(x) = \{
x \}$ and $\lambda(x) - \lambda(x) = 0$.
Assume that the statement holds for both children $\hc{x}$ and $\lc{x}$ of a
node $x$.
By definition, we have
\begin{align*}
		\wb(x) ~ &= ~ \wb(\hc{x}) + (1+1/\alpha) \cdot \wb(\lc{x}) \\ 
		&= ~ \sum_{\ell \in \mathcal L(\hc{x})} (1+1/\alpha)^{\gl(\ell)-\gl(\hc{x})} \cdot \wb(\ell) + (1+1/\alpha) \cdot \sum_{\ell \in \mathcal L(\lc{x})} (1+1/\alpha)^{\gl(\ell)-\gl(\lc{x})} \cdot \wb(\ell) \\
		&= ~ \sum_{\ell \in \mathcal L(\hc{x})} (1+1/\alpha)^{\gl(\ell)-\gl(x)} \cdot \wb(\ell) + (1+1/\alpha) \cdot \sum_{\ell \in \mathcal L(\lc{x})} (1+1/\alpha)^{\gl(\ell)-\gl(x)-1} \cdot \wb(\ell) \\
		&= ~ \sum_{\ell \in \mathcal L(\hc{x})} (1+1/\alpha)^{\gl(\ell)-\gl(x)} \cdot \wb(\ell) + \sum_{\ell \in \mathcal L(\lc{x})} (1+1/\alpha)^{\gl(\ell)-\gl(x)} \cdot \wb(\ell) \\
		&= ~ \sum_{\ell \in \mathcal L(x)}
(1+1/\alpha)^{\gl(\ell)-\gl(x)} \cdot \wb(\ell) \enspace ,
\end{align*}
where we used $\lambda(\lc x) = \lambda(x)+1$ and $\lambda(\hc x) =
\lambda(x)$.
}

\LongVersion
\begin{proof}
\ProofLemmaNodeWeightBound
\end{proof}
\LongVersionEnd

\noindent
Corollary~\ref{corollary:RootWeight} is immediate, since $\gl(r_T) = 0$ for
the root $r_T$ of a flip tree $T$.

\begin{corollary} \label{corollary:RootWeight}
The root $r_T$ of a flip tree $T$ satisfies
\begin{MathMaybe}
\wb(r_T) ~ = ~ \sum_{\ell \in \mathcal L(r_T)} (1+1/\alpha)^{\gl(\ell)} \cdot
\wb(\ell) \enspace .
\end{MathMaybe}
\end{corollary}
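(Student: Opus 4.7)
The plan is to apply Lemma~\ref{Lemma:NodeWeightBound} directly with the choice $x = r_T$, so essentially no new work is needed. First I would verify the boundary value of the light-depth function at the root: by definition, $\gl(x)$ counts the number of light links on the unique path from $x$ to the root of its flip tree, so for $x = r_T$ that path is empty and $\gl(r_T) = 0$.

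Second, I would substitute $x = r_T$ into the formula guaranteed by Lemma~\ref{Lemma:NodeWeightBound}, namely
\begin{MathMaybe}
\wb(x) = \sum_{\ell \in \mathcal L(x)} (1+1/\alpha)^{\gl(\ell) - \gl(x)} \cdot \wb(\ell),
\end{MathMaybe}
and use $\gl(r_T) = 0$ to make the exponent $\gl(\ell) - \gl(r_T)$ collapse to $\gl(\ell)$, yielding exactly the claimed identity. Since Lemma~\ref{Lemma:NodeWeightBound} already applies to every node of every flip tree (including the root), no additional induction or case analysis is required.

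There is really no obstacle here; the only subtlety worth double-checking is the convention for $\gl$ at the root, which is handled by the definition of light depth as the number of light links on the path from the node to the root of its flip tree. With that convention fixed, the statement follows by a one-line specialization of the previous lemma.
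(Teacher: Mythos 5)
Your proposal is correct and matches the paper's own (one-line) justification exactly: the paper states that the corollary is immediate from Lemma~\ref{Lemma:NodeWeightBound} since $\gl(r_T)=0$. Nothing further is needed.
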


The following observation stems from the fact that in each segment, the
$\wb$-value of the parent is at least $(2 + 1 / \alpha)$ times that of the
light child (equality holds when both children have the same $\wb$-value).

\begin{observation} \label{observation:LeafWeightUpperBound}
For any flip tree $T$ with root $r_T$ and any leaf $\ell$ of $T$, we have
\begin{MathMaybe}
\wb(r_T) ~ \geq ~ (2+1/\alpha)^{\lambda(\ell)} \cdot \wb(\ell) \enspace .
\end{MathMaybe}
\end{observation}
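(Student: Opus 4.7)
The plan is to prove this by a short induction along the upward path from $\ell$ to $r_T$, using two simple monotonicity facts that follow directly from the defining recursion $\wb(x) = \wb(\hc{x}) + (1 + 1/\alpha) \cdot \wb(\lc{x})$. Since all leaf values $\wb(\ell) = w(e)$ are positive and the recursion preserves positivity, every $\wb$-value in the tree is positive, which allows us to use the inequalities freely.

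First I would isolate the two key per-link inequalities. If $x$ is the parent of its heavy child $\hc{x}$, then trivially $\wb(x) \geq \wb(\hc{x})$ because the extra term $(1+1/\alpha) \wb(\lc{x})$ is nonnegative. If $x$ is the parent of its light child $\lc{x}$, then using $\wb(\hc{x}) \geq \wb(\lc{x})$ (by the very definition of heavy and light children) we get $\wb(x) \geq \wb(\lc{x}) + (1+1/\alpha)\wb(\lc{x}) = (2+1/\alpha)\wb(\lc{x})$.

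Now consider the unique path $\ell = v_0, v_1, \dots, v_k = r_T$ from $\ell$ up to the root, where $k$ is the depth of $\ell$. I would argue by induction on $j$ that $\wb(v_j) \geq (2+1/\alpha)^{\lambda(\ell) - \lambda(v_j)} \cdot \wb(\ell)$; equivalently, that going from $v_{j-1}$ to $v_j$ multiplies the lower bound by $(2+1/\alpha)$ exactly when $v_{j-1}$ is the light child of $v_j$ (which is exactly when the link $\{v_{j-1},v_j\}$ is a light link), and by at least $1$ otherwise. The base case $j=0$ is trivial, and the inductive step uses precisely the two inequalities above. At $j = k$ we have $\lambda(v_k) = \lambda(r_T) = 0$, so $\wb(r_T) \geq (2+1/\alpha)^{\lambda(\ell)} \cdot \wb(\ell)$, as desired.

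There is essentially no obstacle here: the statement is a straightforward consequence of the recursion and of the convention that $\lc{x}$ is chosen to satisfy $\wb(\lc{x}) \leq \wb(\hc{x})$. The only subtlety worth flagging is the case of ties, where either child could be designated light; but since the claimed inequality compares $\wb(x)$ with $(2+1/\alpha) \wb(\lc{x})$ and the two children have equal $\wb$-value in that case, the argument goes through for any consistent tie-breaking.
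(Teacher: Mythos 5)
Your proof is correct and follows exactly the reasoning the paper uses (stated there only as a one-line remark before the observation): each light link on the path from $\ell$ up to $r_T$ multiplies $\wb$ by at least $(2+1/\alpha)$ since $\wb(x)=\wb(\hc{x})+(1+1/\alpha)\wb(\lc{x})\ge (2+1/\alpha)\wb(\lc{x})$, and each heavy link is non-decreasing, so inducting along the path and counting $\lambda(\ell)$ light links gives the bound. Your explicit handling of the tie case is a harmless addition and matches the paper's parenthetical remark about equality.
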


We now turn to bound $\wb(r_T)$ for all trees $T \in F$ with respect to the sum of
the weights of the edges that correspond to the leaves of $T$.
Since all these edges are part of $M^*$ by construction of $F$, this will
allow us to bound the cost of $M_G$ with respect to $M^*$. 

\begin{lemma} \label{lemma:RootWeightBound}
The virtual weights in a flip tree $T$ satisfy
\begin{MathMaybe}
\wb(r_T) ~ = ~
\BigO \Big( (1 + 1 / \alpha)^{\log n} \sum_{\ell \in \mathcal L(r_T)}
\wb(\ell) \Big) \enspace .
\end{MathMaybe}
\end{lemma}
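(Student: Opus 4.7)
The plan is to prove, by induction on the leaf count $m = |\mathcal L(r_T)|$, the pointwise bound
\begin{equation*}
\wb(r_T) \;\leq\; (1+1/\alpha)^{\lceil \log m \rceil} \sum_{\ell \in \mathcal L(r_T)} \wb(\ell).
\end{equation*}
Since every leaf of $F$ corresponds to an edge of $M^{*}$ we have $m \leq n$, and this pointwise bound yields the claimed $\BigO\bigl((1+1/\alpha)^{\log n}\bigr)$ blow-up.

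The key preparatory observation is a ``relaxed'' form of the $\wb$ recursion: for any internal node $x$ with children $c, \bar c$, regardless of which one is $\wb$-heavy,
\begin{equation*}
\wb(x) \;\leq\; \wb(c) + (1+1/\alpha)\,\wb(\bar c).
\end{equation*}
Indeed, when $c = \hc{x}$ this is exactly the definition of $\wb(x)$; when $c = \lc{x}$ the difference between the two sides equals $(1/\alpha)\bigl(\wb(c) - \wb(\bar c)\bigr)$, which is non-positive by the definition of $\lc{x}$. Telescoping this inequality along any root-to-leaf path $r_T = x_0, x_1, \dots, x_k = \ell_0$ and letting $\mathrm{off}_i$ denote the sibling of $x_{i+1}$ gives
\begin{equation*}
\wb(r_T) \;\leq\; \wb(\ell_0) + (1+1/\alpha) \sum_{i=0}^{k-1} \wb(\mathrm{off}_i).
\end{equation*}

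I would then choose the path to be a \emph{size}-heavy root-to-leaf path, selecting at each step the child whose subtree contains at least as many leaves as its sibling (ties broken arbitrarily). With this choice every $\mathrm{off}_i$ subtree has at most $\lfloor m/2 \rfloor$ leaves, so the inductive hypothesis supplies $\wb(\mathrm{off}_i) \leq (1+1/\alpha)^{\lceil \log m \rceil - 1}\sum_{\ell \in \mathcal L(\mathrm{off}_i)} \wb(\ell)$. Substituting into the telescoped inequality, using the disjoint partition $\{\ell_0\} \cup \bigcup_i \mathcal L(\mathrm{off}_i) = \mathcal L(r_T)$ to collapse the sums, and noting that $(1+1/\alpha)^{\lceil \log m \rceil} \geq 1$ absorbs the residual $\wb(\ell_0)$ term, closes the induction.

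The main conceptual obstacle is that the notion of ``heavy/light'' driving the $\wb$ recursion is defined by $\wb$-value, whereas the classical $\log m$ depth bound relies on subtree size; these two decompositions in general disagree, so a $\wb$-heavy path gives no control on how many leaves drop off along the way. The relaxed recursion in the second paragraph is exactly what decouples them: it permits the argument to descend along a size-heavy path (which \emph{does} guarantee each off-path subtree has at most half the leaves) while still paying only a single $(1+1/\alpha)$ factor per off-path subtree, and this is precisely what produces the geometric $(1+1/\alpha)^{\log n}$ growth.
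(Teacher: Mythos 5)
Your proof is correct, but it follows a genuinely different route from the paper's. The paper works with the exact identity $\wb(r_T) = \sum_{\ell} (1+1/\alpha)^{\gl(\ell)}\wb(\ell)$ (Corollary~\ref{corollary:RootWeight}), groups leaves by light depth, and uses Observation~\ref{observation:LeafWeightUpperBound} ($\wb(\ell) \leq (2+1/\alpha)^{-\gl(\ell)}\wb(r_T)$) together with the bound $|\mathcal L(T)| \leq n$ to show that leaves of light depth beyond $j' = \log_{\frac{2+1/\alpha}{1+1/\alpha}}(2n) = \BigO(\log n)$ contribute at most half of $\wb(r_T)$; truncating the sum there gives the claim. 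You instead prove the pointwise bound $\wb(r_T) \leq (1+1/\alpha)^{\lceil \log m\rceil}\sum_\ell \wb(\ell)$ by induction on the leaf count $m$, and your key move --- the relaxed recursion $\wb(x) \leq \wb(c) + (1+1/\alpha)\wb(\bar c)$ valid for \emph{either} labelling of the children (since $\min\{\wb(y),\wb(z)\} \leq \wb(\bar c)$) --- is exactly what lets you descend along a \emph{size}-heavy rather than a $\wb$-heavy path, so that each off-path subtree has at most $\lfloor m/2\rfloor$ leaves and absorbs one factor $(1+1/\alpha)$. All steps check out: the telescoping, the disjoint partition $\{\ell_0\}\cup\bigcup_i \mathcal L(\mathrm{off}_i) = \mathcal L(r_T)$, and the inequality $\lceil\log\lfloor m/2\rfloor\rceil \leq \lceil\log m\rceil - 1$. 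What each approach buys: the paper's argument reuses machinery (light depth, Corollary~\ref{corollary:RootWeight}) that feeds into the sharper analysis of \Sec{}~\ref{section:AnalysisGeneralCase}, whereas yours is self-contained and elementary; moreover, your bound delivers the exponent $\lceil\log n\rceil$ exactly as stated in the lemma, while the paper's truncation argument yields exponent $c\log n$ with $c = 1/\log\frac{2+1/\alpha}{1+1/\alpha} > 1$ --- a discrepancy that is harmless only because the lemma is applied with $\alpha = \BigO(\log n)$, where both expressions are $\BigO(1)$.
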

\begin{proof}
Corollary~\ref{corollary:RootWeight} implies that
$\wb(r_T) = \sum_{\ell \in \mathcal L(r_T)} (1+1/\alpha)^{\gl(\ell)} \cdot \wb(\ell)$.
We group the leaves according to their light depth, where $\mathcal L_j$
denotes the set of leaves $\ell$ of $T$ with $\gl(\ell) = j$.
The equation for $\wb(r_T)$ can now be rewritten as
$\wb(r_T) = \sum_{j=0}^n \ \sum_{\ell \in \mathcal L_j} (1+1/\alpha)^{j} \cdot
\wb(\ell)$.
Let $\Wb^{>} = \sum_{j = j'}^n \sum_{\ell \in \mathcal L_j}
(1+1/\alpha)^{j} \cdot \wb(\ell)$ for some $j'$ that will soon be determined.
We apply Observation~\ref{observation:LeafWeightUpperBound} and the fact that
there are at most $n$ leaves in $T$ altogether to conclude
\ShortVersion
\begin{align*}
	\Wb^{>} ~ &= ~ \sum_{j=j'}^{n} \ \sum_{\ell \in \mathcal L_j} (1+1/\alpha)^{j} \cdot \wb(\ell) ~ \leq ~ \sum_{j=j'}^{n} \ \sum_{\ell \in \mathcal L_j} \left(\frac{1+1/\alpha}{2+1/\alpha} \right)^{\!\!j} \cdot \wb(r_T) \\
	&= ~ \sum_{j=j'}^{n} |\mathcal L_j| \left(\frac{1+1/\alpha}{2+1/\alpha}\right)^{\!\!j} \cdot \wb(r_T) ~ \leq n ~ \cdot \left(\frac{1+1/\alpha}{2+1/\alpha}\right)^{\!\!j'}
\cdot \wb(r_T) \, .
\end{align*}
\ShortVersionEnd
\LongVersion
\begin{align*}
	\Wb^{>} ~ &= ~ \sum_{j=j'}^{n} \ \sum_{\ell \in \mathcal L_j} (1+1/\alpha)^{j} \cdot \wb(\ell) \\
	&\leq ~ \sum_{j=j'}^{n} \ \sum_{\ell \in \mathcal L_j} \left(\frac{1+1/\alpha}{2+1/\alpha} \right)^{\!\!j} \cdot \wb(r_T) \\
	&= ~ \sum_{j=j'}^{n} |\mathcal L_j| \left(\frac{1+1/\alpha}{2+1/\alpha}\right)^{\!\!j} \cdot \wb(r_T) \\
	&\leq ~ n \cdot \left(\frac{1+1/\alpha}{2+1/\alpha}\right)^{\!\!j'}
\cdot \wb(r_T) \enspace .
\end{align*}
\LongVersionEnd
Choosing
$j' = \log_{\frac{2 + 1 / \alpha}{1 + 1 /\alpha}} (2 n) = \BigO (\log n)$
yields
$\Wb^{>} \leq \wb(r_T) / 2$.
This means that the leaves with light depth at most $c \log n$ for some
constant $c$ contribute at least half of $\wb(r_T)$ and thus it suffices to
consider only those leaves in order to bound $\wb(r_T)$:
\begin{align*}
\wb(r_T) & ~ \leq ~ 2 \cdot \sum_{j=0}^{c \log n} \ \sum_{\ell \in \mathcal L_j}
(1 + 1 / \alpha)^{j} \cdot \wb(\ell) ~ \leq ~ 2 \cdot (1 + 1 / \alpha)^{c \log n} \sum_{\ell \in \mathcal L(r_T)}
\wb(\ell) \enspace . \tag*{\qedhere}
\end{align*}
\end{proof}

At this stage, we would like to relate the virtual weight $\wb(r_T)$ of the
roots $r_T$ in $F$ to the cost of the stable matching $M_G$ returned by 
\Greedy. 
To that end, we observe that $M_G$ consists of the edges corresponding to the
roots in $F$ and to the edges that have been flipped along the course of the
execution;
let $D$ denote the set of the latter edges.

Consider the flip of edge $(u, v)$, resulting in the insertion of edge $(M(u),
M(v)) \cor x$ to $M$ and the removal of edges $(u, M(u)) \cor x_{L}$ and $(v,
M(v)) \cor x_{H}$ from $M$.
Since $\wb(x) = \wb(x_{H}) + (1 + 1 / \alpha) \wb(x_{L})$, we have
$\wb(x) - (\wb(x_{L}) + \wb(x_{H})) = \wb(x_{L}) / \alpha$.
Lemma~\ref{lemma:FlipEdgeWeightBound} then implies that
$\wb(x) - (\wb(x_{L}) + \wb(x_{H})) \geq w(u, M(u)) / \alpha$, and since edge
$(u, v)$ was flipped, we have $\wb(x) - (\wb(x_{L}) + \wb(x_{H})) \geq w(u,
v)$.
Therefore,
\[
\sum_{e \in D} w(e)
~ \leq ~
\sum_{\text{internal } x \in U} \left( \wb(x) - (\wb(x_{L}) + \wb(x_{H})) \right)
~ = ~
\sum_{\text{flip trees T}} \Big( \wb(r_T) - \sum_{\ell \in \mathcal{L}(T)}
\wb(\ell) \Big) \enspace ,
\]
where the second equation holds by a telescoping argument.
Corollary~\ref{corollary:flip tree cost} follows since $c(M^{*}) = \sum_{\ell \in \mathcal L(F)}
\wb(\ell)$.

\begin{corollary} \label{corollary:flip tree cost}
The matching $M_G$ returned by \Greedy satisfies
$c(M_{G}) \leq 2 \sum_{\text{\rm flip trees $T$}} \wb(r_T) - c(M^{*})$.
\end{corollary}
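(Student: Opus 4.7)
The plan is to partition $M_G$ into two groups and bound the total weight of each against $\sum_T \wb(r_T)$. First, by Lemma~\ref{lemma:EdgeUnstableOnlyOnce}, the set $D$ of edges that were flipped at some point during the execution is entirely contained in $M_G$. The remaining edges of $M_G$ are exactly those corresponding to the roots of the flip forest $F$, where isolated single-node trees (the $M^*$-edges that were never touched) count as roots. The justification is that every non-root node of $F$ corresponds to an active edge that was removed from the matching when its parent was created, so only root-edges survive to $M_G$; together with the irrevocability of flipped edges, this makes $\{e_{r_T}\}_T \cup D$ a partition of $M_G$.

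For the root contribution, the bound is immediate from Lemma~\ref{lemma:FlipEdgeWeightBound}: writing $e_{r_T}$ for the edge corresponding to root $r_T$, we get $\sum_T w(e_{r_T}) \leq \sum_T \wb(r_T)$.

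For $D$, I would argue via a telescoping identity on each flip tree. At each internal node $x$ with light child $\lc{x}$ and heavy child $\hc{x}$, the recursive definition of $\wb$ gives $\wb(x) - \wb(\lc{x}) - \wb(\hc{x}) = \wb(\lc{x})/\alpha$. Now suppose $x$ records the flip of some edge $(u,v)$. Since $(u,v)$ was $\alpha$-unstable at that moment, $\alpha \cdot w(u,v) < \min\{w(u,M(u)), w(v,M(v))\}$. The edges $(u,M(u))$ and $(v,M(v))$ correspond precisely to the two children of $x$; let $e_{\lc{x}}$ be the one sitting at $\lc{x}$. Then $\alpha \cdot w(u,v) < w(e_{\lc{x}}) \leq \wb(\lc{x})$, where the last inequality is Lemma~\ref{lemma:FlipEdgeWeightBound} applied at $\lc{x}$. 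Rearranging gives $w(u,v) < \wb(x) - \wb(\lc{x}) - \wb(\hc{x})$. Summing this over all internal nodes and telescoping within each flip tree yields $\sum_{e \in D} w(e) \leq \sum_T \wb(r_T) - \sum_{\ell \in \mathcal L(F)} \wb(\ell)$.

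To finish, I use that the leaves of $F$ are in bijection with the edges of $M^*$ and carry the corresponding edge weights as their $\wb$-values, so $\sum_{\ell \in \mathcal L(F)} \wb(\ell) = c(M^*)$. Combining the two bounds gives $c(M_G) \leq 2 \sum_T \wb(r_T) - c(M^*)$, which is the claim. The only spot requiring a moment's care is that the instability argument is insensitive to which of the two children of $x$ is the light one: the strict inequality $\alpha \cdot w(u,v) < \min\{w(u,M(u)), w(v,M(v))\}$ holds in particular for the weight of $e_{\lc{x}}$, regardless of which of the two edges happens to sit at $\lc{x}$. I do not anticipate any other obstacle, since the rest is bookkeeping.
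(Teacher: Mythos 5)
Your proposal is correct and follows essentially the same route as the paper: decompose $M_G$ into the root edges (bounded via Lemma~\ref{lemma:FlipEdgeWeightBound}) and the flipped edges $D$, bound each $w(u,v)$ by $\wb(x) - \wb(\lc{x}) - \wb(\hc{x}) = \wb(\lc{x})/\alpha$ using the instability condition together with Lemma~\ref{lemma:FlipEdgeWeightBound}, and telescope within each flip tree to conclude $\sum_{e \in D} w(e) \leq \sum_T \bigl( \wb(r_T) - \sum_{\ell \in \mathcal{L}(T)} \wb(\ell) \bigr)$. Your remark that the argument is insensitive to which child is light is exactly the point the paper handles implicitly, so there is nothing to add.
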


We are now ready to establish the following lemma.

\begin{lemma}
	\label{lemma:log-stable is c-minimum}
The cost of the matching $M_G$ returned by \Greedy for $\alpha = \BigO(\log
n)$ is an $\BigO (1)$ approximation of $c(M^{*})$.
\end{lemma}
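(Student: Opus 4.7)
The plan is to simply compose the machinery that has already been developed. Corollary~\ref{corollary:flip tree cost} gives an upper bound on $c(M_G)$ in terms of the sum $\sum_{T} \wb(r_T)$ over all flip trees $T$ in the flip forest $F$, so I would begin by feeding each $\wb(r_T)$ into the estimate from Lemma~\ref{lemma:RootWeightBound}, which bounds it by $\BigO\bigl((1+1/\alpha)^{\log n}\bigr)$ times $\sum_{\ell \in \mathcal L(r_T)} \wb(\ell)$.

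Next, I would sum this bound over all flip trees in $F$, including the trivial one-node trees corresponding to edges of $M^{*}$ that were never touched by a flip. The key observation is that the leaves of $F$ are in one-to-one correspondence with the edges of the minimum-cost matching $M^{*}$, and for each such leaf $\ell \cor e$ we have $\wb(\ell) = w(e)$ by definition. Thus $\sum_{T}\sum_{\ell \in \mathcal L(T)} \wb(\ell) = c(M^{*})$, and plugging into Corollary~\ref{corollary:flip tree cost} gives
\[
c(M_G) \;\leq\; 2 \cdot \BigO\bigl((1+1/\alpha)^{\log n}\bigr) \cdot c(M^{*}) - c(M^{*}) \;=\; \BigO\bigl((1+1/\alpha)^{\log n}\bigr) \cdot c(M^{*}).
\]

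Finally I would dispose of the factor $(1+1/\alpha)^{\log n}$ under the hypothesis $\alpha = \BigO(\log n)$ (more precisely, $\alpha \geq c\log n$ for a suitable constant $c$). Using the standard inequality $1 + x \leq e^{x}$, I get $(1 + 1/\alpha)^{\log n} \leq \exp\!\bigl((\log n)/\alpha\bigr)$, which is at most $e^{1/c} = \BigO(1)$ whenever $\alpha \geq c\log n$. Substituting into the previous display yields $c(M_G) = \BigO(c(M^{*}))$, as required.

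I do not foresee a real obstacle here; the only subtlety to double-check is that the flip forest is indeed indexed so that each edge of $M^{*}$ appears exactly once as a leaf (counting trivial trees corresponding to edges of $M^{*}$ that were never part of any flip), so that the sum of leaf weights telescopes to $c(M^{*})$ without over- or under-counting.
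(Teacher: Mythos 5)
Your proposal is correct and follows essentially the same route as the paper: combine Corollary~\ref{corollary:flip tree cost} with Lemma~\ref{lemma:RootWeightBound}, use that the leaf weights of the flip forest sum to $c(M^{*})$, and observe that $(1+1/\alpha)^{\log n} = \BigO(1)$ once $\alpha \geq c \log n$. Your explicit note that the hypothesis must be read as $\alpha = \Omega(\log n)$ for the exponential factor to collapse is a fair clarification of the paper's slightly loose ``$\alpha = \BigO(\log n)$'' phrasing, but the argument itself is the same.
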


\begin{proof}
Employing Lemma~\ref{lemma:RootWeightBound} and setting $\alpha = \BigO(\log
n)$, we get $\wb(r_T) = \BigO \big(\! \sum_{\ell \in \mathcal{L}(T)} \wb(\ell)
\big)$.
Corollary~\ref{corollary:flip tree cost} and the fact that $c(M^{*}) =
\sum_{\ell \in \mathcal L(F)} \wb(\ell)$ then imply that
$c(M_{G}) = \BigO (c(M^{*}))$ as desired.
\end{proof}

\subsection{Tight Upper Bound}
\label{section:AnalysisGeneralCase}

Our goal in this section is to show that when \Greedy is invoked
with parameter $\alpha$ for any $\alpha \geq 1$, it returns an $\alpha$-stable
matching $M_{G}$ satisfying
$c(M_{G}) = c(M^{*}) \cdot \BigO (n^{\log (1 + 1 / (2 \alpha))})$.
This is performed by taking a deeper examination of the properties of our
flip trees and their virtual weights.
It will be convenient to ignore the relation of the flip trees to the \Greedy
algorithm at this stage;
in other words, we consider an abstract full binary tree $T$ with a function
$w : \mathcal{L}(T) \rightarrow \mathbb{R}_{\geq 0}$ that assigns non-negative
weights to the leaves of $T$, which then determines the virtual weight
$\wb(x)$ of each node in $T$, following the recursion of
\Equation{}~(\ref{eq:wb definition}).
Note that we allow our tree $T$ to have zero-weight leaves now (this can only
make our analysis more general).

\begin{definition}[Complete Binary Tree]
A full binary tree $T$ is called \emph{complete} if all leaves are at depth
$h(T)$ or $h(T) - 1$.
Given some positive integer $n$ that will typically be the number of leaves
in some tree, let
\[ h(n) = \lceil \log n \rceil \quad \text{ and } \quad k(n) = 2^{h(n)} - n \enspace . \]
Note that $0 \leq k(n) < 2^{h(n)-1}$.
\end{definition}

\begin{definition}[\wb-Balanced Flip Tree]
A full binary tree $T$ is called \emph{\wb-balanced} if for any two sibling
nodes $x,y$ in $T$, we have $\wb(x) = \wb(y)$.
\end{definition}

Consider some full binary tree $T$.
Let $\lw(T)$ denote the sum of the virtual weights of $T$'s leaves, that is,
$\lw(T) = \sum_{\ell \in \mathcal{L}(T)} \wb(\ell)$, and let $\Wb(T) =
\wb(r_T)$ (recall that $r_T$ denotes the root of $T$).
The following observation is established by induction on the depth of the
nodes.

\begin{observation} \label{observation:balanced tree node wb}
For any node $v$ of a \wb-balanced full binary tree $T$, we have $\wb(v) =
(2+1/\alpha)^{-d(v)} \cdot \Wb(T)$.
\end{observation}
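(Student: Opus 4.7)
The plan is to prove this by a straightforward induction on the depth $d(v)$, exploiting the fact that in a $\wb$-balanced tree the recursion collapses to a simple multiplicative factor. The base case $d(v) = 0$ is immediate since then $v = r_T$ and $\wb(r_T) = \Wb(T)$ by definition of $\Wb$.

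For the inductive step, the key observation is the following. At any inner node $x$ with children $\lc{x}$ and $\hc{x}$, the recursion from \Equation{}~(\ref{eq:wb definition}) reads $\wb(x) = \wb(\hc{x}) + (1+1/\alpha) \cdot \wb(\lc{x})$. If $T$ is $\wb$-balanced, then $\wb(\lc{x}) = \wb(\hc{x})$ (the light/heavy distinction becomes vacuous up to an arbitrary tie-breaking), so this simplifies to $\wb(x) = (2 + 1/\alpha) \cdot \wb(\lc{x}) = (2 + 1/\alpha) \cdot \wb(\hc{x})$. In other words, every child of $x$ has virtual weight exactly $\wb(x) / (2 + 1/\alpha)$.

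Now assume the claim holds for all nodes at depth $d$ and let $v$ be a node at depth $d + 1$. Its parent $p$ lies at depth $d$, so the inductive hypothesis gives $\wb(p) = (2 + 1/\alpha)^{-d} \cdot \Wb(T)$. Applying the simplified recursion above to $p$ yields $\wb(v) = \wb(p) / (2 + 1/\alpha) = (2 + 1/\alpha)^{-(d+1)} \cdot \Wb(T)$, completing the induction. There is no real obstacle here; the only thing to remark is that the $\wb$-balanced assumption is precisely what lets us replace the asymmetric recursion by the symmetric factor $(2 + 1/\alpha)$, making the product telescope cleanly along the root-to-$v$ path.
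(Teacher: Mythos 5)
Your proof is correct and matches the paper's approach exactly: the paper merely remarks that the observation "is established by induction on the depth of the nodes," and your write-up supplies precisely that induction, using the $\wb$-balanced property to collapse the recursion to the factor $(2+1/\alpha)$ at each level.
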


\begin{definition}[Effect of a Flip Tree]
The \emph{effect} $\eff(T)$ of a full binary tree $T$ is defined to be
\[
\eff(T) ~ = ~
\begin{cases}
\Wb(T) / \lw(T) & \text{if } \lw(T) > 0 \\
1 & \text{if } \lw(T) = 0 
\end{cases} \enspace .
\]
An $n$-leaf full binary tree $T$ is said to be \emph{effective} if it maximizes
$\eff(T)$, namely, if there does not exist any $n$-leaf full binary tree $T'$
such that $\eff(T') > \eff(T)$.
\end{definition}

Intuitively speaking, if we think of $T$ as a flip tree, then its effect is a
measure for the factor by which the flips represented by $T$ increase the cost
of $M^*$ when applied to it.
But, once again, we do not restrict our attention to flip trees at this stage.
The effect of a full binary tree is essentially determined by its topology
and by the assignment of weights to its leaves.
It is important to point out that by Corollary~\ref{corollary:RootWeight}, the
effect of a full binary tree is not affected by scaling its leaf weights.
Our upper bound is established by showing that the effect of an effective
$n$-leaf full binary tree is $\BigO \left( n^{\log (1 + 1 / (2 \alpha))} \right)$.
We begin by developing a better understanding of the topology of effective
$\wb$-balanced full binary trees.
%

\begin{lemma} \label{Lemma:BalancedTreeIsComplete}
An effective $n$-leaf \wb-balanced full binary tree must be complete.
\end{lemma}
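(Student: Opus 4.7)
The plan is to reduce effectiveness for $\wb$-balanced trees to a purely combinatorial question about leaf depths, then establish completeness by a local exchange argument. By Observation~\ref{observation:balanced tree node wb}, every leaf $\ell$ of a $\wb$-balanced tree $T$ satisfies $\wb(\ell) = (2+1/\alpha)^{-d(\ell)} \cdot \Wb(T)$. Setting $t = 2 + 1/\alpha$ (so $t > 2$ since $\alpha \geq 1$), we obtain $\lw(T) = \Wb(T) \sum_{\ell \in \mathcal{L}(T)} t^{-d(\ell)}$, and therefore
\[
\eff(T) \;=\; \frac{1}{\sum_{\ell \in \mathcal{L}(T)} t^{-d(\ell)}}.
\]
Maximizing $\eff(T)$ over $\wb$-balanced $n$-leaf trees is thus equivalent to minimizing $S(T) := \sum_{\ell \in \mathcal{L}(T)} t^{-d(\ell)}$ over the possible topologies.

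Next, suppose for contradiction that $T$ is $\wb$-balanced and effective but not complete. Let $h = h(T)$. Since $T$ is full and $h$ is the maximum depth, there exists a pair of sibling leaves $\ell_1, \ell_2$ at depth $h$ (the sibling of any max-depth leaf must itself be a leaf at depth $h$). Since $T$ is not complete, there is some leaf $\ell_0$ at depth $d \leq h - 2$. I would then construct a new full binary tree $T'$ with the same $n$ leaves as follows: promote the parent $p$ of $\ell_1, \ell_2$ to a leaf at depth $h-1$ (deleting $\ell_1, \ell_2$), and simultaneously replace $\ell_0$ by an internal node having two new leaves at depth $d+1$. One checks that $T'$ is again a full binary tree with $n$ leaves (the Kraft contribution $2\cdot t^{-(d+1)} + t^{-(h-1)}$ balances $t^{-d} + 2 t^{-h}$ when $t = 2$, confirming the count of leaves is preserved).

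Finally, equip $T'$ with $\wb$-balanced leaf weights and compute
\[
S(T') - S(T) \;=\; 2 t^{-(d+1)} + t^{-(h-1)} - t^{-d} - 2 t^{-h} \;=\; (t-2)\bigl(t^{-h} - t^{-(d+1)}\bigr).
\]
Because $t > 2$ and $h \geq d + 2$, the right-hand side is strictly negative, so $S(T') < S(T)$, i.e. $\eff(T') > \eff(T)$, contradicting the effectiveness of $T$. Hence no such non-complete effective $\wb$-balanced tree exists. I expect the algebra in the last step to be straightforward once the local exchange is set up; the main obstacle is choosing the exchange cleanly so that both the full-binary-tree property and the leaf count are preserved — the asymmetric exchange above (merging a deep sibling pair while splitting a shallow leaf) is exactly what is forced by Kraft's equality with step $t=2$, which is why the two changes match up to an overall $(t-2)$ factor.
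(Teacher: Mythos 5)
Your proof is correct and takes essentially the same route as the paper's: both perform the local exchange that merges a deepest sibling-leaf pair into their parent while splitting a shallower leaf into two children, and both reduce the comparison to the sign of $(t-2)\bigl(t^{-h} - t^{-(d+1)}\bigr)$ with $t = 2 + 1/\alpha$ (the paper writes this as $(1/\alpha)\bigl((2+1/\alpha)^{-d_2} - (2+1/\alpha)^{-d_1-1}\bigr)$ after normalizing $\Wb(T)=1$). Your reformulation of $\eff(T)$ as $1/\sum_{\ell} t^{-d(\ell)}$ is just that normalization in disguise, so the arguments coincide.
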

\newcommand{\ProofLemmaBalancedTreeIsComplete}{
Aiming for a contradiction, suppose that $T$ is not complete and scale the
leaf weights in $T$ so that $\Wb(T) = 1$.
Because $T$ is not complete, it must have leaves at depth $d_1$ and at depth
$d_2$, where $d_2 > d_1 + 1$.
The assertion is established by showing that an $n$-leaf full binary tree with
higher effect can be obtained by a small modification to $T$'s topology, in
contradiction to the assumption that $T$ is effective.

Let $y$ be a leaf at depth $d_1$ and $\ell_1$ and $\ell_2$ be two leaves at
depth $d_2 > d_1 + 1$ with parent node $z$.
Since $T$ is \wb-balanced, we can employ Observation~\ref{observation:balanced
tree node wb} to conclude that $\wb(\ell_1) = \wb(\ell_2) = (2 + 1 /
\alpha)^{-d_2}$ and $\wb(y) = (2 + 1 / \alpha)^{-d_1}$.

Now, consider the $\wb$-balanced full binary tree $T'$ obtained from $T$ by
removing $\ell_1$ and $\ell_2$ and adding two new leaves $\ell'_1$ and
$\ell'_2$ as children of $y$ with virtual weight $\wb(\ell'_1) = \wb(\ell'_2)
= (2 + 1 / \alpha)^{-d_1 - 1}$, keeping the virtual weight of all other nodes
unchanged.
By doing so, we turn $z$ --- an internal node in $T$ --- into a leaf (whose
virtual weight remains $\wb(z) = (2 + 1 / \alpha)^{-d_2 + 1}$).
On the other hand, $y$ which is a leaf in $T$, is an internal node in $T'$.
Therefore,
\begin{align*}
\lw(T')
~ = ~ &
\lw(T) + \wb(\ell'_1) + \wb(\ell'_2) + \wb(z) - \wb(\ell_1) - \wb(\ell_2) -
\wb(y) \\
= ~ &
\lw(T) + 2 \cdot (2 + 1 / \alpha)^{-d_1 - 1} + (2 + 1 / \alpha)^{-d_2 + 1}
- 2 \cdot (2 + 1 / \alpha)^{-d_2} - (2 + 1 / \alpha)^{-d_1} \\
= ~ &
\lw(T) + (2 + 1 / \alpha)^{-d_1 - 1} (2 - (2 + 1 / \alpha)) + (2 + 1 /
\alpha)^{-d_2} (2 + 1 / \alpha - 2) \\
= ~ &
\lw(T) + (1 / \alpha) ((2 + 1 / \alpha)^{-d_2} - (2 + 1 / \alpha)^{-d_1 - 1})
\\
< ~ &
\lw(T) \enspace .
\end{align*}
As $\Wb(T') = \Wb(T) = 1$, it follows that $\eff(T') > \eff(T)$, in
contradiction to the effectiveness of $T$.
} 

\begin{proof}
  \ProofLemmaBalancedTreeIsComplete
\end{proof}

Next, we develop a closed-form expression for the effect of complete
$\wb$-balanced full binary trees.
 
\begin{lemma}
	\label{lem:expression for eff of balanced complete tree}
The effect of an $n$-leaf complete $\wb$-balanced full binary tree $T$ is
\[ \eff(T) = \frac{(2 + 1 / \alpha)^{h}}{2^{h} + k / \alpha} \enspace ,\]
where $h = h(n)$ and $k = k(n)$.
\end{lemma}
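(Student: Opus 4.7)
The plan is to directly compute $\lw(T)$ for a complete $\wb$-balanced full binary tree and then invoke the definition of effect. I will let $q = 2 + 1/\alpha$ throughout to keep the algebra clean.

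First I would determine the leaf profile of $T$. Since $T$ is complete, every leaf lies at depth $h$ or $h-1$. Let $a$ be the number of leaves at depth $h$ and $b$ the number of leaves at depth $h-1$. Because $T$ is full, the leaves at depth $h$ come in pairs as the two children of internal nodes at depth $h-1$; writing $I$ for the number of such internal nodes, we have $a = 2I$ and $b = 2^{h-1} - I$. Combining with $a + b = n$ gives $I = n - 2^{h-1}$, so $b = 2^{h-1} - (n - 2^{h-1}) = 2^h - n = k$ and $a = n - k = 2n - 2^h$.

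Next I would apply Observation~\ref{observation:balanced tree node wb} to each leaf. Since $T$ is $\wb$-balanced, a leaf at depth $d$ has virtual weight $q^{-d} \cdot \Wb(T)$. Summing over the two depth classes,
\begin{equation*}
\lw(T) \;=\; a \cdot q^{-h}\Wb(T) + b \cdot q^{-(h-1)}\Wb(T) \;=\; \Wb(T) \cdot q^{-h} \bigl[ (2n - 2^h) + k \cdot q \bigr].
\end{equation*}
Substituting $q = 2 + 1/\alpha$ and $k = 2^h - n$, the bracketed factor simplifies to
\begin{equation*}
(2n - 2^h) + (2^h - n)\bigl(2 + 1/\alpha\bigr) \;=\; (2n - 2^h) + (2 \cdot 2^h - 2n) + k/\alpha \;=\; 2^h + k/\alpha.
\end{equation*}

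Finally, since $T$ has at least one positive-weight leaf (otherwise $\lw(T)=0$ and the statement is vacuous, in which case $\Wb(T)=0$ as well), the definition of effect gives
\begin{equation*}
\eff(T) \;=\; \frac{\Wb(T)}{\lw(T)} \;=\; \frac{q^{h}}{2^h + k/\alpha} \;=\; \frac{(2 + 1/\alpha)^{h}}{2^h + k/\alpha},
\end{equation*}
as claimed. No real obstacle arises here; the only mild subtleties are the correct counting of leaves at the two relevant depths (using the fullness of $T$ to pair up the deepest leaves) and the algebraic simplification of the bracket. Everything else is routine given Observation~\ref{observation:balanced tree node wb}.
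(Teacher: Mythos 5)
Your proof is correct and follows essentially the same route as the paper's: apply Observation~\ref{observation:balanced tree node wb} to the leaves at depths $h$ and $h-1$, sum to get $\lw(T)$, and simplify. The only cosmetic differences are that you derive the leaf counts at the two depths explicitly (the paper just asserts them) and carry $\Wb(T)$ as a factor rather than scaling it to $1$.
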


\begin{proof}
	Again we assume without loss of generality that the weights of the leaves are
	scaled so that $\Wb(T) = 1$.
	By definition, $T$ has $2^h - 2k$ leaves at depth $h$ and $k$ leaves at
	depth $h - 1$.
	Employing Observation~\ref{observation:balanced tree node wb}, we conclude
	\begin{align*}
	\lw(T)
	~ = ~ &
	(2^{h} - 2k) \cdot (2 + 1 / \alpha)^{-h} + k \cdot (2 + 1 / \alpha)^{-(h
	- 1)} \\
	= ~ &
	(2 + 1 / \alpha)^{-h} \cdot (2^{h} - 2k + k \cdot (2 + 1 / \alpha)) \\
	= ~ &
	(2 + 1 / \alpha)^{-h} \cdot (2^{h} + k / \alpha ) \enspace .
	\end{align*}
	Since $\Wb(T) = 1$, we have $\eff(T) = 1 / \lw(T)$ which completes the
proof.
\end{proof}

Note that the expression for the effect of an $n$-leaf complete $\wb$-balanced
full binary tree given by Lemma~\ref{lem:expression for eff of balanced
complete tree} is monotonically increasing with $h$ and monotonically
decreasing with $k$.
We are now ready to show that it is essentially sufficient to consider
complete $\wb$-balanced full binary trees.

\begin{lemma}
	\label{lem:eff-max tree is balanced}
An effective $n$-leaf full binary tree must be $\wb$-balanced.
\end{lemma}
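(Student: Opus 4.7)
The plan is strong induction on $n$, with the cases $n \in \{1,2\}$ immediate (the former vacuous, the latter a direct one-variable calculation). For the inductive step, let $T$ be an effective $n$-leaf tree with root-children subtrees $T_1, T_2$ of sizes $n_1 + n_2 = n$, and by homogeneity of $\eff$ under scaling of leaf weights assume $\Wb(T) = 1$, so that effectiveness amounts to minimizing $\lw(T)$. A substitution argument then shows each $T_i$ must itself be effective for $n_i$ leaves: otherwise, scaling an effective $n_i$-leaf tree $T_i^{*}$ so that its root's $\wb$-value matches $\wb(r_{T_i})$ and substituting it into $T$ in place of $T_i$ preserves $\wb(r_T)$ while strictly reducing $\lw(T)$, contradicting effectiveness of $T$. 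The induction hypothesis then yields that each $T_i$ is $\wb$-balanced; combining with Lemma~\ref{Lemma:BalancedTreeIsComplete} gives that each is also complete, and Lemma~\ref{lem:expression for eff of balanced complete tree} pins down $\eff(T_i)$ as a closed-form function of $n_i$ whose strict monotonicity in $n_i$ is a routine computation.

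With the subtree structure thus understood, I reduce the remaining problem to a one-variable optimization at the root. Let $W_H \geq W_L$ be the root $\wb$-values of the two subtrees and write $e_H, e_L$ for the corresponding effects; by homogeneity fix $W_H = 1$ and let $r = W_L \in [0,1]$. A direct calculation expresses the resulting effect as
\[
\eff(T) \;=\; \frac{1 + (1 + 1/\alpha)\,r}{1/e_H + r/e_L},
\]
whose derivative in $r$ has sign equal to the sign of $(1+1/\alpha)\, e_L - e_H$ throughout the interval $[0,1]$. Consequently the maximum over $r \in [0,1]$ is attained at an endpoint: either $r = 1$, the $\wb$-balanced configuration $W_H = W_L$, or $r = 0$, the degenerate configuration in which every leaf of the lighter subtree carries weight $0$.

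The main obstacle is ruling out the degenerate endpoint. In that case $\eff(T) = e_H$ equals the effect of a balanced complete tree with strictly fewer than $n$ leaves, which by the monotonicity established above is strictly less than the effect of an $n$-leaf balanced complete tree. Since the latter is itself an $n$-leaf tree, this contradicts the effectiveness of $T$. Thus $T$ must be in the balanced configuration at its root; together with the inductively established $\wb$-balance inside each subtree, $T$ is $\wb$-balanced throughout, completing the induction.
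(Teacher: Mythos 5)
Your proof is correct and follows essentially the same route as the paper's: induction on $n$, a substitution argument showing both root subtrees are effective (hence, by induction and the earlier lemmas, balanced and complete with known closed-form effects), and a one-variable optimization at the root whose monotonicity forces an endpoint, with the degenerate endpoint excluded via the strict monotonicity of the closed-form effect in the number of leaves. The only cosmetic difference is the normalization (you fix $\Wb(T_H)=1$ and get a linear-fractional function of $r=\Wb(T_L)$, while the paper fixes $\lw(T)=1$ and gets an affine function of $\lw(T_L)$, using a minimal-$x$ extremal choice for the zero-derivative case, which your endpoint comparison also handles since a constant $f$ forces $\eff(T)=e_H$).
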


\begin{proof}
We prove the statement by induction on the number of leaves $n$.
The base case of a tree having a single leaf (which is also the root) holds
vacuously;
the base case of a tree having two leaves is trivial.
Assume that the assertion holds for trees with less than $n$ leaves and let
$T$ be an effective $n$-leaf full binary tree.
Let $T_{L}$ and $T_{H}$ be the subtrees rooted at the light and heavy,
respectively, children of $r_{T}$ (break ties arbitrarily).
Let $n_L$ and $n_H$ be the number of leaves in $T_L$ and $T_H$, respectively,
where $n_L + n_H = n$ and $0 < n_L, n_H < n$.
Observe that since
$\eff(T) = \frac{\Wb(T_H) + (1+1/\alpha) \cdot \Wb(T_L)}{\lw(T_H) + \lw(T_L)}$,
both $T_L$ and $T_H$ have to be effective as otherwise, $\eff(T)$ could be
increased;
more precisely, if $T_i \in \{ T_H,\, T_L \}$ is not effective, then one can
increase $\Wb(T_i)$ while keeping $\lw(T_i)$ unchanged, which results in an
increased $\eff(T)$.
Thus, by the inductive hypothesis, we conclude that $T_{L}$ and $T_{H}$ must be
$\wb$-balanced.
Lemma~\ref{Lemma:BalancedTreeIsComplete} then guarantees that
both $T_{L}$ and $T_{H}$ are complete.

Aiming for a contradiction, suppose that $T$ is not $\wb$-balanced, that is
$\Wb(T_{H}) > \Wb(T_{L})$.
Assume without loss of generality that the leaf weights are scaled such that
$\lw(T) = \lw(T_{H}) + \lw(T_{L}) = 1$ and set $\lw(T_{L}) = x$, $\lw(T_{H}) =
1 - x$, for some $0 \leq x \leq 1$.
Let $T$ be the tree minimizing $x$ among all trees satisfying the
aforementioned assumptions.

We argue that $x$ cannot be neither $0$ nor $1$.
Indeed, if $x = 1$, then $\Wb(T_H) = 0$, in contradiction to the assumption
that $\Wb(T_H) > \Wb(T_L)$.
On the other hand, if $x = 0$, then $\Wb(T) = \Wb(T_H)$ and $\lw(T) =
\lw(T_H)$, hence $\eff(T) = \eff(T_{H})$.
But since $T_H$ has $n_{H} < n$ leaves, Lemma~\ref{lem:expression for eff of
balanced complete tree} guarantees that its effect is smaller than that of an
$n$-leaf complete $\wb$-balanced full binary tree, in contradiction to the
assumption that $T$ is effective.

So, we may subsequently assume that $0 < x < 1$.
Employing Lemma~\ref{lem:expression for eff of balanced complete tree}, we can
express $\Wb(T)$ as
\[
\Wb(T)
~ = ~
\Wb(T_H) + (1 + 1 / \alpha) \cdot \Wb(T_L)
~ = ~
\frac{(2 + 1 / \alpha)^{h_H}}{2^{h_H} + k_H / \alpha} \cdot (1 - x) +
(1 + 1 / \alpha) \cdot \frac{(2 + 1 / \alpha)^{h_L}}{2^{h_L} + k_L / \alpha} \cdot x \enspace ,
\]
where $h_{H} = h(n_H)$, $k_H = k(n_H)$, $h_L = h(n_L)$, and $k_L = k(n_L)$.
Using this expression, we can formulate $\eff(T)$ as a function $f = f(x)$,
setting
\begin{equation}
f(x)
~ = ~
\frac{(2 + 1 / \alpha)^{h_H}}{2^{h_H} + k_H / \alpha} \cdot (1 - x) +
(1 + 1 / \alpha) \cdot \frac{(2 + 1 / \alpha)^{h_L}}{2^{h_L} + k_L / \alpha} \cdot x \enspace .
\label{eq:f}
\end{equation}
The crucial observation now is that $f(x)$ is linear in $x$, thus $\frac{\mathrm
d f}{\mathrm d x}(x)$ is independent of $x$. Moreover, since $T$ is not
\wb-balanced, it follows that $f$ is well defined --- that is,
\Equation{}~(\ref{eq:f}) remains valid --- in a neighborhood of $x =
\lw(T_L)$.
Therefore, if,
$\frac{\mathrm d f}{\mathrm d x}(x) > 0$, then $f(x) = \eff(T)$ can be increased
by increasing $x$ (shifting weight from the leaves of $T_{H}$ to the leaves of
$T_{L}$), in contradiction to the effectiveness of $T$. On the other hand, if
$\frac{\mathrm d f}{\mathrm d x}(x) \leq 0$, then we can decrease $x$ (shifting
weight from the leaves of $T_{L}$ to the leaves of $T_{H}$) without decreasing
$f(x) = \eff(T)$, contradicting the assumption that $x$ is minimum.
The assertion follows.
\end{proof}

\noindent Recalling that $h = h(n)$ and $k = k(n)$, we observe that 
\[
\frac{(2 + 1 / \alpha)^{h}}{2^{h} + k / \alpha}
~ = ~
\Theta \! \left( (1 + 1 / (2 \alpha))^h \right)
~ = ~
\Theta \! \left( n^{\log (1 + 1 / (2 \alpha))} \right) \enspace .
\]
Combined with Lemmas \ref{Lemma:BalancedTreeIsComplete}, \ref{lem:expression
for eff of balanced complete tree}, and \ref{lem:eff-max tree is balanced}, we
get the following corollary.

\begin{corollary} \label{corollary:EffectiveTrees}
The effect of an $n$-leaf full binary tree is $\BigO\! \left( n^{\log (1 + 1 /
(2 \alpha))} \right)$.
\end{corollary}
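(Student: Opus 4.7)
My plan is to combine the three preceding lemmas to deduce that an effective $n$-leaf full binary tree must be both $\wb$-balanced and complete, then plug into the closed-form expression from Lemma~\ref{lem:expression for eff of balanced complete tree} and perform an elementary asymptotic estimate. The heavy lifting is done by Lemmas~\ref{Lemma:BalancedTreeIsComplete}, \ref{lem:expression for eff of balanced complete tree}, and \ref{lem:eff-max tree is balanced}; the corollary is essentially their assembly.

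First I would argue that an effective $n$-leaf full binary tree $T^{*}$ actually exists. Since $\eff$ is invariant under scaling the leaf weights, one can restrict attention to nonzero weight assignments on the compact simplex $\{w \geq 0 : \lw = 1\}$; for each of the finitely many topologies on $n$ leaves the map $w \mapsto \Wb(T)$ is continuous, so a maximizer $T^{*}$ exists, and for any $n$-leaf full binary tree $T$ we have $\eff(T) \leq \eff(T^{*})$. Lemma~\ref{lem:eff-max tree is balanced} forces $T^{*}$ to be $\wb$-balanced, and Lemma~\ref{Lemma:BalancedTreeIsComplete} then forces it to also be complete. Lemma~\ref{lem:expression for eff of balanced complete tree} therefore gives
\[
\eff(T) \;\leq\; \eff(T^{*}) \;=\; \frac{(2 + 1/\alpha)^{h}}{2^{h} + k/\alpha},
\]
where $h = \lceil \log n \rceil$ and $k = 2^{h} - n \in [0, 2^{h-1})$.

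The only remaining step is an asymptotic estimate. Dropping $k/\alpha$ from the denominator yields the upper bound $(2 + 1/\alpha)^{h} / 2^{h} = (1 + 1/(2\alpha))^{h}$. Using $h \leq \log n + 1$, this is $\BigO\!\left((1 + 1/(2\alpha))^{\log n}\right) = \BigO\!\left(n^{\log(1 + 1/(2\alpha))}\right)$, as claimed. The main (mild) care point is the handling of the ceiling in $h$, which contributes only a constant factor absorbed into the $\BigO$; there is no conceptual obstacle at this stage, since the three preceding lemmas already performed all the combinatorial reduction, and this corollary amounts to a one-line closed-form estimate on their conclusion.
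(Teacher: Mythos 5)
Your proposal is correct and follows essentially the same route as the paper: combine Lemmas~\ref{lem:eff-max tree is balanced}, \ref{Lemma:BalancedTreeIsComplete}, and \ref{lem:expression for eff of balanced complete tree} to reduce to the closed-form expression $\frac{(2+1/\alpha)^{h}}{2^{h}+k/\alpha}$ and then estimate it as $\BigO\!\left(n^{\log(1+1/(2\alpha))}\right)$. Your explicit compactness argument for the existence of an effective tree is a welcome addition that the paper leaves implicit, but it does not change the structure of the argument.
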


\noindent Now, let us return the focus to our flip forest.
Recalling that $\sum_{\text{flip trees } T} \sum_{\ell \in \mathcal{L}(T)}
\wb(\ell) = c(M^*)$, and using Corollary~\ref{corollary:flip tree cost},
we conclude that
\[
\frac{c(M_G)}{c(M^*)}
~ = ~
\BigO \! \left( \frac{\sum_{\text{flip trees } T} \Wb(T)}{\sum_{\text{flip trees
} T} \lw(T)} \right)
~ = ~
\BigO \! \left( \max_{\text{flip trees } T} \frac{\Wb(T)}{\lw(T)} \right)
~ = ~
\BigO \! \left( \max_{\text{flip trees } T} \eff(T) \right) \enspace .
\]
The desired upper bound then follows from
Corollary~\ref{corollary:EffectiveTrees}.

\LongVersion 
\subsection{Lower Bound}
\label{section:LowerBoundPoS}
\LongVersionEnd 

\newcommand{\SectionLowerBoundPoS}{
Our goal in this section is to establish the lower bound of
Theorem~\ref{theorem:PoS}.
The graph construction that lies at the heart of this lower bound, denoted
$\grt^{k}_{\alpha}$, is a direct generalization of the Reingold-Tarjan graph
$\grt^{k}$
presented in \Sec{}~\ref{section:PoA} for arbitrary values of
$\alpha$.
Specifically, the $2$-vertex graph $\grt^{1}_{\alpha}$ is identical to
$\grt^{1}$; and
the $2^{k + 1}$-vertex graph $\grt^{k + 1}_{\alpha}$ is constructed
recursively by placing $2$ disjoint instances of $\grt^{k}_{\alpha}$, each of
diameter $D^{k}_{\alpha}$, on the real line, only that this time, the spacing
between them is set to
$S^{k + 1}_{\alpha} = (1 / \alpha - \varepsilon) D^{k}_{\alpha}$,
for some sufficiently small $\varepsilon > 0$ that will be determined later
on.
This implies that
$D^{k}_{\alpha} = (2 + 1 / \alpha - \varepsilon)^{k - 1}$
and
$S^{k + 1}_{\alpha} = (1 / \alpha - \varepsilon) (2 + 1 / \alpha -
\varepsilon)^{k - 1}$.

Now let $M$ be an $\alpha$-stable matching in $\grt^{k}_{\alpha}$.
We argue that $M$ has to contain each edge $e = (x, y)$ with $w(e) = 1 /
\alpha - \varepsilon$.
Indeed, if $e \notin M$, then $e$ is $\alpha$-unstable with respect to $M$
since
$w(e) < \alpha \cdot \min \{ w(x, x'), w(y, y')\}$ for all other vertices $x',
y'$.
Given that all vertices with distance $1 / \alpha - \varepsilon$ are therefore
already matched, we can apply the same argument for each edge connecting two
adjacent vertices with edge weight $(1 / \alpha - \varepsilon) (2 + 1 /
\alpha- \varepsilon)$ and thereby conclude that these edges have to be in $M$
as well.
By repeating this argument, we end up with the unique $\alpha$-stable matching
$M$ that has to contain the edge $(x^{k}_{1}, x^{k}_{2^{k}})$ whose weight is
$D^{k}_{\alpha}$ and and all other edges whose weight differs from $1$.
Thus,
$c(M) \geq D^{k}_{\alpha} = (2 + 1 / \alpha - \varepsilon)^{k - 1}$.

On the other hand, the cost of the minimum-cost matching $M^{*}$ is not larger
than that of the matching using all weight $1$ edges, thus we can bound the
cost of $M^{*}$ as
$c(M^{*}) \leq 2^{k - 1}$.
Together, we conclude that
\begin{align}
\PoS_{\alpha}(\grt^{k}_{\alpha})
~ \geq ~ &
\frac{c(M)}{c(M^{*})} \nonumber \\
\geq ~ &
\frac{(2 + 1 / \alpha - \varepsilon)^{k - 1}}{2^{k - 1}} \nonumber \\
= ~ &
\Omega \left( 1 + \frac{1}{2 \alpha} \right)^{k -
1} \label{equation:SmallEpsilon} \\
= ~ &
\Omega \left( n^{\log \left( 1 + \frac{1}{2 \alpha} \right)} \right) \enspace
, \label{equation:ExpressN}
\end{align}
where (\ref{equation:SmallEpsilon}) holds by taking a sufficiently small
$\varepsilon$ and (\ref{equation:ExpressN}) follows by recalling that
$\grt^{k}_{\alpha}$ has $2 n = 2^{k}$ vertices.
} 
\LongVersion 
\SectionLowerBoundPoS{}
\LongVersionEnd 

\ShortVersion 
\clearpage

\pagenumbering{roman}
\appendix

\renewcommand{\theequation}{A-\arabic{equation}}
\setcounter{equation}{0}

\begin{center}
\textbf{\Large{APPENDIX}}
\end{center}

\section{Lower Bound for $\PoS_{\alpha}(2 n)$}
\label{appendix:LowerBoundPoS}
\SectionLowerBoundPoS{}

\section{Additional Proofs from \Sec{}~\ref{section:PoS}}
\label{appendix:AdditionalProofsSectionPoS}

\begin{proof}[Proof of Lemma~\ref{lemma:UnstableEdgeCreation}]
\ProofLemmaUnstableEdgeCreation{}
\end{proof}

\begin{proof}[Proof of Lemma~\ref{lemma:OutputsValidMatching}]
\ProofLemmaOutputsValidMatching{}
\end{proof}

\begin{proof}[Proof of Lemma~\ref{lemma:EdgeUnstableOnlyOnce}]
\ProofLemmaEdgeUnstableOnlyOnce{}
\end{proof}

\begin{proof}[Proof of Lemma~\ref{Lemma:NodeWeightBound}]
	\ProofLemmaNodeWeightBound{}
\end{proof}

\ShortVersionEnd 
\clearpage
\renewcommand{\thepage}{}

\bibliographystyle{abbrv}
\bibliography{references}

\begin{thebibliography}{10}

\bibitem{AlbersEEMR06}
S.~Albers, S.~Eilts, E.~Even-Dar, Y.~Mansour, and L.~Roditty.
\newblock On nash equilibria for a network creation game.
\newblock In {\em Proceedings of the seventeenth annual ACM-SIAM symposium on
  Discrete algorithm}, SODA '06, pages 89--98, New York, NY, USA, 2006. ACM.

\bibitem{ABM+91}
H.~Alt, N.~Blum, K.~Mehlhorn, and M.~Paul.
\newblock Computing a maximum cardinality matching in a bipartite graph in time
  ${O}\big(n^{1.5} \sqrt{\frac{m}{\log n}}\big)$.
\newblock {\em Information Processing Letters}, 37(4):237--240, 1991.

\bibitem{AndelmanFM07}
N.~Andelman, M.~Feldman, and Y.~Mansour.
\newblock Strong price of anarchy.
\newblock In {\em Proceedings of the eighteenth annual ACM-SIAM symposium on
  Discrete algorithms}, SODA '07, pages 189--198, Philadelphia, PA, USA, 2007.
  Society for Industrial and Applied Mathematics.

\bibitem{ADK+08}
E.~Anshelevich, A.~Dasgupta, J.~M. Kleinberg, {\'E}.~Tardos, T.~Wexler, and
  T.~Roughgarden.
\newblock The price of stability for network design with fair cost allocation.
\newblock {\em SIAM Journal on Computing (SICOMP)}, 38(4):1602--1623, 2008.

\bibitem{AnshelevichDTW03}
E.~Anshelevich, A.~Dasgupta, E.~Tardos, and T.~Wexler.
\newblock Near-optimal network design with selfish agents.
\newblock In {\em Proceedings of the thirty-fifth annual ACM symposium on
  Theory of computing}, STOC '03, pages 511--520, New York, NY, USA, 2003. ACM.

\bibitem{ABE+09}
E.~M. Arkin, S.~W. Bae, A.~Efrat, K.~Okamoto, J.~S.~B. Mitchell, and
  V.~Polishchuk.
\newblock Geometric stable roommates.
\newblock {\em Information Processing Letters}, 109(4):219--224, 2009.

\bibitem{AwerbuchAE05}
B.~Awerbuch, Y.~Azar, and A.~Epstein.
\newblock The price of routing unsplittable flow.
\newblock In {\em Proceedings of the thirty-seventh annual ACM symposium on
  Theory of computing}, STOC '05, pages 57--66, New York, NY, USA, 2005. ACM.

\bibitem{AwerbuchART06}
B.~Awerbuch, Y.~Azar, Y.~Richter, and D.~Tsur.
\newblock Tradeoffs in worst-case equilibria.
\newblock {\em Theor. Comput. Sci.}, 361(2):200--209, Sept. 2006.

\bibitem{ChenR06}
H.-L. Chen and T.~Roughgarden.
\newblock Network design with weighted players.
\newblock In {\em Proceedings of the eighteenth annual ACM symposium on
  Parallelism in algorithms and architectures}, SPAA '06, pages 29--38, New
  York, NY, USA, 2006. ACM.

\bibitem{ChristodoulouK05b}
G.~Christodoulou and E.~Koutsoupias.
\newblock On the price of anarchy and stability of correlated equilibria of
  linear congestion games.
\newblock In {\em Proceedings of the 13th annual European conference on
  Algorithms}, ESA'05, pages 59--70, Berlin, Heidelberg, 2005. Springer-Verlag.

\bibitem{ChristodoulouK05a}
G.~Christodoulou and E.~Koutsoupias.
\newblock The price of anarchy of finite congestion games.
\newblock In {\em Proceedings of the thirty-seventh annual ACM symposium on
  Theory of computing}, STOC '05, pages 67--73, New York, NY, USA, 2005. ACM.

\bibitem{CzumajV02}
A.~Czumaj and B.~V\"{o}cking.
\newblock Tight bounds for worst-case equilibria.
\newblock In {\em Proceedings of the thirteenth annual ACM-SIAM symposium on
  Discrete algorithms}, SODA '02, pages 413--420, Philadelphia, PA, USA, 2002.
  Society for Industrial and Applied Mathematics.

\bibitem{Edmonds65a}
J.~Edmonds.
\newblock Maximum matching and a polyhedron with $0,1$ vertices.
\newblock {\em Journal of Research of the National Bureau of Standards},
  69~B:125--130, 1965.

\bibitem{Edmonds65}
J.~Edmonds.
\newblock Paths, trees, and flowers.
\newblock {\em Canadian Journal of Mathematics}, 17:449--467, 1965.

\bibitem{Feder92}
T.~Feder.
\newblock A new fixed point approach for stable networks and stable marriages.
\newblock {\em Journal of Computer and System Sciences}, 45:233--284, 1992.

\bibitem{GT91}
H.~N. Gabow and R.~E. Tarjan.
\newblock Faster scaling algorithms for general graph matching problems.
\newblock {\em Journal of the ACM (JACM)}, 38(4):815--853, October 1991.

\bibitem{GS62}
D.~Gale and L.~S. Shapley.
\newblock College admissions and the stability of marriage.
\newblock {\em The American Mathematical Monthly}, 69(1):9--14, 1962.

\bibitem{GI89}
D.~Gusfield and R.~W. Irving.
\newblock {\em The stable marriage problem: structure and algorithms}.
\newblock MIT Press, Cambridge, MA, USA, 1989.

\bibitem{HK73}
J.~E. Hopcroft and R.~M. Karp.
\newblock An $n^{5 / 2}$ algorithm for maximum matchings in bipartite graphs.
\newblock {\em SIAM Journal on Computing (SICOMP)}, 2(4):224--231, 1973.

\bibitem{ILG87}
R.~W. Irving, P.~Leather, and D.~Gusfield.
\newblock An efficient algorithm for the ''optimal'' stable marriage.
\newblock {\em Journal of the ACM (JACM)}, 34(3):532--543, July 1987.

\bibitem{IMS08}
R.~W. Irving, D.~F. Manlove, and S.~Scott.
\newblock The stable marriage problem with master preference lists.
\newblock {\em Discrete Applied Mathematics}, 156:2959--2977, August 2008.

\bibitem{JohariT04}
R.~Johari and J.~N. Tsitsiklis.
\newblock Efficiency loss in a network resource allocation game.
\newblock {\em Math. Oper. Res.}, 29(3):407--435, Aug. 2004.

\bibitem{Knuth76}
D.~E. Knuth.
\newblock {\em Marriages stables et leurs relations avec d'autres probl{\`e}mes
  combinatoires}.
\newblock Les Presses de l'Universit{\'e} de Montr{\'e}al, 1976.

\bibitem{KoutsoupiasMS03}
E.~Koutsoupias, M.~Mavronicolas, and P.~G. Spirakis.
\newblock Approximate equilibria and ball fusion.
\newblock {\em Theory Comput. Syst.}, pages 683--693, 2003.

\bibitem{KoutsoupiasP09}
E.~Koutsoupias and C.~Papadimitriou.
\newblock Worst-case equilibria.
\newblock {\em Computer Science Review}, 3(2):65 -- 69, 2009.

\bibitem{MV80}
S.~Micali and V.~V. Vazirani.
\newblock An ${O}(\sqrt{|{V}|} \cdot |{E}|)$ algorithm for finding maximum
  matching in general graphs.
\newblock In {\em Proceedings of the 21st IEEE Symposium on Foundations of
  Computer Science (FOCS)}, pages 17--27, 1980.

\bibitem{Motwani94}
R.~Motwani.
\newblock Average-case analysis of algorithms for matchings and related
  problems.
\newblock {\em Journal of the ACM (JACM)}, 41(6):1329--1356, 1994.

\bibitem{MS04}
M.~Mucha and P.~Sankowski.
\newblock Maximum matchings via gaussian elimination.
\newblock In {\em Proceedings of the 45th IEEE Symposium on Foundations of
  Computer Science (FOCS)}, pages 248--255, 2004.

\bibitem{NH91}
C.~Ng and D.~S. Hirschberg.
\newblock Three-dimensional stable matching problems.
\newblock {\em SIAM Journal on Discrete Mathematics}, 4:245--252, March 1991.

\bibitem{NisanR01}
N.~Nisan and A.~Ronen.
\newblock Algorithmic mechanism design.
\newblock {\em Games and Economic Behavior}, 35(1-2):166--196, April 2001.

\bibitem{Papa01}
C.~Papadimitriou.
\newblock Algorithms, games, and the internet.
\newblock In {\em Proceedings of the thirty-third annual ACM symposium on
  Theory of computing}, STOC '01, pages 749--753, New York, NY, USA, 2001. ACM.

\bibitem{RT81}
E.~M. Reingold and R.~E. Tarjan.
\newblock On a greedy heuristic for complete matching.
\newblock {\em SIAM Journal on Computing (SICOMP)}, 10:676--681, 1981.

\bibitem{Roth82}
A.~E. Roth.
\newblock The economics of matching: stability and incentives.
\newblock {\em Mathematics of Operations Research}, 7:617--628, 1982.

\bibitem{RS90}
A.~E. Roth and M.~A.~O. Sotomayor.
\newblock {\em Two-sided matching: a study in game-theoretic modeling and
  analysis}.
\newblock Cambridge University Press, 1990.

\bibitem{Roughgarden03}
T.~Roughgarden.
\newblock The price of anarchy is independent of the network topology.
\newblock {\em J. Comput. Syst. Sci.}, 67(2):341--364, Sept. 2003.

\bibitem{Roughgarden06}
T.~Roughgarden.
\newblock Potential functions and the inefficiency of equilibria.
\newblock {\em Proceedings of the International Congress of Mathematicians
  (ICM)}, 3:1071--1094, 2006.

\bibitem{RoughgardenT02}
T.~Roughgarden and E.~Tardos.
\newblock How bad is selfish routing?
\newblock {\em J. ACM}, 49(2):236--259, Mar. 2002.

\bibitem{SSM03}
A.~S. Schulz and N.~E. {Stier Moses}.
\newblock On the performance of user equilibria in traffic networks.
\newblock In {\em SODA}, pages 86--87, 2003.

\bibitem{SuriTZ07}
S.~Suri, C.~D. T{\'o}th, and Y.~Zhou.
\newblock Selfish load balancing and atomic congestion games.
\newblock {\em Algorithmica}, 47(1):79--96, 2007.

\bibitem{Vetta02}
A.~Vetta.
\newblock Nash equilibria in competitive societies, with applications to
  facility location, traffic routing and auctions.
\newblock In {\em Proceedings of the 43rd Symposium on Foundations of Computer
  Science}, FOCS '02, pages 416--, Washington, DC, USA, 2002. IEEE Computer
  Society.

\end{thebibliography}

\end{document}